\newcommand{\gammab}{\Gamma_1}
\newcommand{\gammao}{\Gamma_2}
\newcommand{\gammano}{\Gamma_3}
\newcommand{\pib}{\Pi_1}
\newcommand{\pio}{\Pi_2}
\newcommand{\pino}{\Pi_3}
\theoremstyle{definition}
\title{Foundations of Fiat-Denominated Loans Collateralized by Cryptocurrencies}
\titlerunning{Foundations of Fiat-Denominated Loans Collateralized by Cryptocurrencies}
\author{Pavel Hubáček}
{Institute of Mathematics, Czech Academy of Sciences, Prague, Czech Republic
\and
Charles University, Faculty of Mathematics and Physics, Prague, Czech Republic}
{hubacek@math.cas.cz}
{https://orcid.org/0000-0002-6850-6222}
{Partially supported by the Academy of Sciences of the Czech Republic (RVO 67985840) and Czech Science Foundation GAČR grant No. 25-16311S.}
\author{Jan Václavek}{Firefish, Prague, Czech Republic}{}{}{}
\author{Michelle Yeo}{National University of Singapore, Singapore}{michellexyeo@gmail.com}{https://orcid.org/0009-0001-3676-4809}{Supported by MOE-T2EP20122-0014 (Data-Driven Distributed Algorithms)}
\authorrunning{P. Hubáček, J. Václavek, and M. Yeo}
\keywords{Blockchains, Cryptocurrencies, DeFi, Loans, Mechanism design, Subgame Perfect Equilibrium, Rational analysis}
\begin{document}
\maketitle

\begin{abstract}
The rising importance of cryptocurrencies as financial assets pushed their applicability from an object of speculation closer to standard financial instruments such as loans.
In this work, we initiate the study of secure protocols that enable fiat-denominated loans collateralized by cryptocurrencies such as Bitcoin.
We provide limited-custodial protocols for such loans relying only on trusted arbitration and provide their game-theoretical analysis.
We also highlight various interesting directions for future research.
\end{abstract}

\section{Introduction}

The increase in the perceived value of cryptocurrencies over the last decade gave rise to a multitude of financial products and, in fact, a whole domain known as \emph{Decentralized Finance} (DeFi)~\cite{WernerEtAlSoKDeFi,schar2021decentralized}.
In this work, we study loans collateralized by cryptocurrencies.
At the moment, there are various providers of such loans (e.g., \href{https://aave.com/}{Aave}, \href{https://compound.finance/}{Compound}, \href{https://debifi.com/}{DebiFi}, or \href{https://lend.hodlhodl.com/}{HodlHodl}).
Some estimate that over 400,000 BTC could have been used as collateral in the cryptocurrency lending market in 2021~\cite{BTCCollateral}.

On a high level, any such lending protocol contains \emph{Lenders} bringing liquidity, \emph{Borrowers} wishing to use cryptoassets as collateral, and a \emph{Platform} connecting the Borrowers with Lenders and handling the loans.

\smallskip{\em Current approaches to loans collateralized by cryptocurrencies.}
Most current approaches to cryptocurrency-backed loans today are centralized \emph{custody-based solutions} where the custodian has full control over the collateral for the lifetime of the loan.
There are two most common types, with majority in the second category: 

\begin{description}
\item[Peer-to-peer:] where the custodian plays a role of a platform that sets up and manages loans (e.g. \href{https://lendabit.com/}{LendaBit.com}), and 
\item[Bank-style:] where the custodian is both the Lender and Platform (e.g. \href{https://blockfi.com/crypto-loans/}{BlockFi}).
\end{description}

In both formats, the custody-based approach has its undeniable advantages: ease of loan setup, minimal interaction between the parties, and straightforward liquidation or top-up of collateral.
On the other hand, the clear major disadvantage, which outweighs all the advantages for many borrowers, is that the custodian has full control over the collateral.
This can lead to disastrous scenarios:
\begin{itemize}
\item The custodian may provide collateral to another entity for further profit (a practice commonly referred to as \emph{rehypothecation}); this exposes collateral to significant counterparty risk as seen in the crash of cryptoloans markets in July 2022~\cite{CryptoloansCrash,GudgeonPHLG20}.
\item The custodian can steal the collateral at any time and there is no mechanism to prevent this.
\item As a single point of failure, the custodian can be hacked, be compromised, or go out of business, and all (or part of) the collateral gets lost.
\end{itemize}

Note that the collateral above refers to the combined collateral from all the loans managed by a custodian at any given time, which can add up to a vast amount of bitcoin, making it a high-value target for the custodian, its employees, hackers, etc.

The second approach, which is arguably not as common, are \emph{multisignature-based solutions} leveraging 2-of-3 multisignature escrow for the collateral.
As with the custody-based approach, there are two types:

\begin{description}
\item[Peer-to-peer:] where the Lender, Borrower and Platform each have a single key to the 2-of-3 multisignature address. The Platform plays a role of an arbitrator and resolves disputes (e.g. \href{https://lend.hodlhodl.com/}{HodlHodl}).    
\item[Bank-style:] where Platform is also the Lender. The Borrower and Platform each have a single key and the third key is held by a so called \emph{third-party key agent}.
The third-party key agent is used when the Borrower does not cooperate or if the Platform goes out of business (e.g. \href{https://unchained.com/loans/}{Unchained Capital}).
\end{description}

The multisignature-based approach is definitely an improvement in terms of security compared to the custodian approach as it eliminates a single point of failure and the complete counterparty risk but it still has significant drawbacks:
In the peer-to-peer format, the Lender must have a private key to the multisignature address and be able to sign specific transactions based on an event that occurs (in the bank-style model, the Lender is essentially the Platform).
Such a requirement may deter many potential Lenders, particularly institutions that may not have the competence or desire to securely manage private keys.
For both formats, the Borrower must have a long-term (\emph{static}) private key to the multisignature address.
The consequence is that the Borrower must securely store such a private key and be ready and able to sign transactions when an event occurs.

\subsection{Fiat-Denominated Loans Collateralized by Cryptocurrencies}
In this work, we focus on a mixed setting where parties interact both with a classical fiat currency such as USD and a cryptocurrency such as Bitcoin.
This differs from the standard approach taken in DeFi, where the protocols are designed primarily  for interactions among digital assets only.
To the best of our knowledge, such a setting was not formally studied before in the context of cryptoloans.
Nevertheless, there is a clear motivation for design of secure protocols supporting such loans.

On the borrower's side, a known speculation strategy is the so-called ``hodling,'' i.e., the strategy to buy some amount of cryptocurrency and
wait an extremely long time for it to rise in value.
Such speculators, known as  hodlers, might wish to use their cryptocurrency as collateral for a loan (e.g., instead of a classical mortgage) as long as they keep  exclusive access to their crypto asset during the lifetime of the loan (except in the case of their default).
An estimated 10M Bitcoins were in dormant wallets in 2020~\cite{DormantBTC}.
On the lender's side, many classical providers of liquidity would be attracted by the interest from an algorithmically governed loan, which is extremely low-risk since the collateral can be liquidated fast in the case of default or undercollateralization due to a drop in value of the cryptocurrency.

However, a purely DeFi solution where the loan is denominated in a cryptocurrency might not be acceptable to either party.
The most natural solution in this context is to use a stablecoin, a decentralised analogue of a fiat currency pegged to its value~\cite{ClarkDM20}, as an intermediary asset for enabling the interface between fiat currency and cryptocurrency.
The first drawback is that stablecoins are built using complex protocols which might deter conservative lenders and borrowers such as a classical bank. 
Additionally, both the lender and borrower (in case the borrower seeks liquidity and not just an instrument for speculation) then have to go through the additional process of exchanging fiat currency for a stablecoin (and vice versa) which induces extra inefficiencies to the process in terms of fees and delays. 

\subsection{Our Protocols}
We present three protocols.
The first protocol illustrates the basic considerations in a simplified setting where we assume exchange rates between the underlying cryptocurrency and fiat are fixed.
Next, we give two protocols in a realistic setting where exchange rates can fluctuate.
All our protocols involve a rational borrower, a rational lender, and an honest arbiter. 
Naturally, the separation of roles in our protocols implies that our protocols have add more value in the more challenging peer-to-peer setting rather than the simpler bank-style setting.

The idea behind the basic protocol $\pib$ is to lock the collateral for the entire duration of the loan into a smart contract.
The contract redistributes the collateral at the end of the loan in a way such that both borrower and lender end up with slightly less than what they put into the protocol, with the remaining portion being held by the arbiter.
The arbiter gives back the respective portions to the parties that behave honestly, and keeps the corresponding portion in case of misbehaviour.

However, in the setting where exchange rates fluctuate thoughout the loan term, the value of the collateral relative to the principal can change drastically and collateral can become vastly overcollateralised or undercollateralised.
This can lead to misbehaviour on the side of both parties.
For instance, the borrower might run away with the loan if the collateral becomes undercollateralised.
Our first protocol $\pio$ in this setting allows the smart contract to make calls to an external price oracle at specified times during the loan term to get the price of the cryptocurrency.
The collateral is then redistributed according to this price to ensure that the lender should always get back an amount worth the value of the principal when the lender behaves honestly. 
When the value of the collateral reaches or drops below the principal, the contract immediately liquidates the collateral and gives the entire collateral to the lender. 

Our second protocol $\pino$ in this setting simplifies the contract significantly by removing oracle calls, and also giving both parties the additional freedom to decide when and at what price to terminate or liquidate the protocol.
The collateral is immediately released upon termination or liquidation. 
This comes with the cost of both parties needing to remain online throughout the loan duration, and the way the collateral is redistributed upon termination or liquidation of the protocol is fixed and has to be specified at the beginning of the protocol.

We show in~\Cref{thm:basic} and~\Cref{thm:main_oracle1} that behaving honestly as specified by the protocol is a subgame perfect equilibrium in the games induced by $\pib$ and $\pio$. 
We also show in~\Cref{thm:main_oracleless} that as long as the price of the cryptocurrency does not rise too much, the honest strategy profile of behaving according to the protocol is a subgame perfect equilibrium in the game induced by $\pino$.

\subsection{Related Work}
The study of the interplay between CeFi and DeFi goes back to Danezis and Meiklejohn~\cite{DanezisM16}.
Qin et al.~\cite{QinZGJG21} conducted an empirical study of major liquidation protocols and their risks in Ethereum.
Gudgeon et al.~\cite{GudgeonW0K20} analysed the effect of interest rates on market efficiency for DeFi loan protocols.
Qin et al.~\cite{QinZLG21} studied attacks on DeFi using flash loans.
Kondratiuk et al.~\cite{KondratiukSNT21} initiated a standardization attempt for cryptoloans using smart contracts on Cardano.

Our work is also related to the works of Avarikioti et al.~\cite{AvarikiotiLW20,AvarikiotiKWZ21} on using a committee of miners for arbitration in payment channels\cite{decker2015fast,csur21crypto,raiden,poon2015lightning} when two rational parties disagree on the state of the channel.
While using a committee or even the blockchain to resolve disputes between the lender and borrower is a possibility, these solutions typically incur a large consensus cost and are less efficient, thus we choose to use the arbiter to resolve disputes. 

\subsection{Glossary}\label{sec:Glossary}
Here, we define terms related to loans that we use throughout the paper. 

\begin{description}
\item[Collateral:] The amount provided by the borrower to back up the loan. The collateral is given to the lender in case the borrower defaults on the loan.
\item[Principal:] The loan amount borrowed by the borrower.
%\item[Interest:] The additional amount a borrower is charged on the principal that depends on the underlying market interest rates.
\item[Liquidation:] The process of using the collateral for repaying the loan in the case of default or undercollateralization.
\item[Loan to Value Ratio (LTV):] The ratio of the principal to the value of the collateral. The LTV determines the maximum amount that the borrower can borrow depending on the amount of collateral the borrower has. For instance, an LTV of $\frac{1}{2}$ would mean that a borrower with $2x$ amount of collateral can only borrow a maximum of $x$ amount of funds from the lender (assuming a 1:1 exchange rate). In our work, all our protocols assume an initial LTV of $\frac{1}{2}$.
\item[$\epsilon$:] A small positive constant parameterising the strength of the penalty that comes with deviating from the specified protocol.
\end{description}

\section{Model}\label{sec:model}

\smallskip{\em Notation.} We use BTC or Bitcoin to denote the underlying cryptocurrency used as collateral, however, we stress that our protocol is not only Bitcoin compatible (see~\Cref{sec:extensions}), but can also be used with other cryptocurrencies. 
We use a fiat to BTC exchange rate of $r$ to denote that $1$M fiat is equal to $r$ BTC. 
Equivalently, the price of BTC is the value of $1$ unit of BTC when converted to fiat.
Thus, a rate of $r$ would imply the price of $1$ BTC is $\frac{1}{r}$M fiat.
For a positive integer $k$, we use $[k]$ to denote $\{1, \dots, k\}$.

\smallskip{\em Loan setting and assumptions.}
Wlog, we assume the borrower wants to borrow $1$M in fiat and holds an amount of Bitcoin which is worth $2$M when converted to fiat, henceforth called the collateral. 
For simplification, we ignore interest rates.
Furthermore, we assume that maturity of the loan is a year.

\smallskip{\em Threat model and availability of arbitration.}
We assume the \emph{borrower} and \emph{lender} are rational and have access to an honest party called \emph{arbiter}.
Recall that we consider a mixed setting where the parties have to exchange the fiat currency and lock the collateral into a smart contract.
This situation gives rise to the so-called \emph{fair exchange problem} known also from online commerce, which might not be resolvable using smart contracts without trusted third party~\cite{Goharshady21}.
Given that we envision a system where a Platform sets up the loans (e.g. enables matching of borrowers and lenders), we assume that the Platform can participate as a trusted intermediary and solve the fair exchange of fiat.
Note that, from the perspective of the lifetime of the loan, the participants need to trust the Platform only at the very beginning and very end of the loan (i.e., during the transfer of the fiat from the lender to the borrower and while repaying the loan).
Thus, we allow the arbiter to handle the transfer of fiat and concentrate on the problem of governance of collateral during the lifetime of the loan.
In particular, one of our core design goals is to minimize the trust put into the arbiter when resolving disputes over the collateral.

\smallskip{\em Game theoretic notions and solution concept.}
Because our loan protocols unfold over time with ordered moves and observable intermediate events, we model them as \emph{extensive-form} games.
This representation records who moves when and what each party knows at that moment, enabling reasoning about every feasible history of moves.
Our solution concept for analyzing the behaviour of rational parties is \emph{subgame-perfect equilibrium} (SPE).
A strategy profile is an SPE if, in every subgame, i.e., after every history of moves, no player can profit from a unilateral deviation.
For our protocols, proving SPE means showing by backward induction that at each node the prescribed action maximizes the mover’s payoff given the continuation strategy, including nodes reached only after deviations.
This rules out equilibria sustained by non-credible off-path threats and yields robustness to timing perturbations and unilateral deviations.
Formal definitions of extensive-form games and SPE appear in~\Cref{app:gtconcepts}.

% We envision the arbiter as a reputable entity that only wishes to profit from the penalties dished out to misbehaving parties.

\section{Basic Protocol for Flat Exchange Rates}
In this section, we present a basic protocol $\pib$ for the setting where the exchange rate between Bitcoin and fiat is fixed throughout the entire duration of the loan.
For ease of presentation, we assume a rate of $1$, and we stress that our protocol can be easily tweaked to accommodate other exchange rates.

\subsection{Protocol Details}\label{sec:prot_details}
Our protocol $\pib$ consists of $2$ phases: contract creation and loan repayment. 
In the contract creation phase, the borrower locks the loan collateral into a smart contract and specifies the conditions upon which the contract can be opened.
The loan repayment phase occurs at the end of the loan and consists of extracting the collateral from the contract and distributing it to the relevant parties.
The details of both phases are as follows:

\begin{figure}[htb!]
    \centering
    \includegraphics[width=\textwidth]{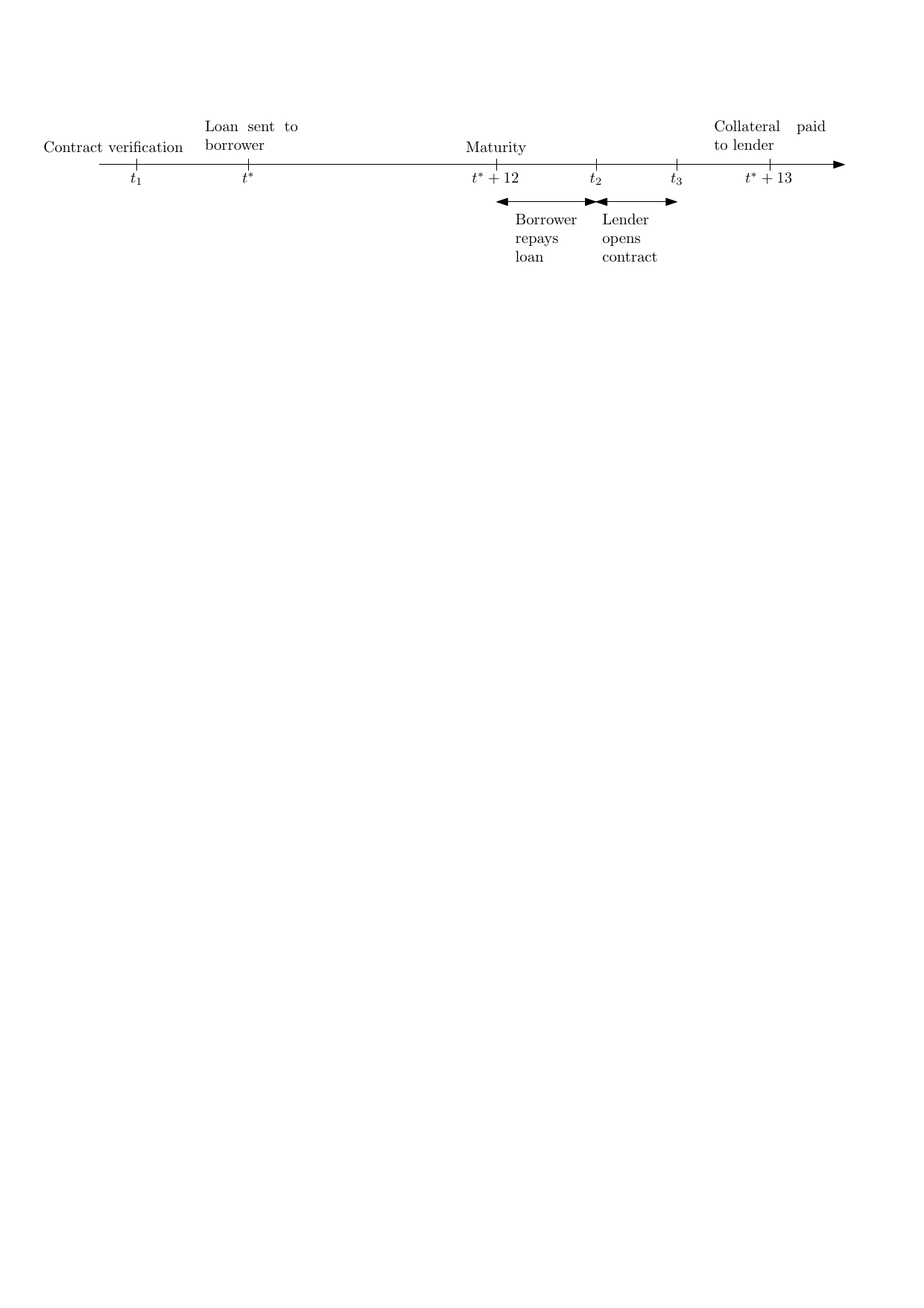}
    \caption{Timeline of important events in the basic protocol.}
    \label{fig:protocol_timeline}
\end{figure}

\begin{description}
\item[Contract creation:]\hfill
\begin{itemize}
    \item The lender sends $1$M fiat to the arbiter.
    \item The borrower creates a smart contract and locks the collateral into the contract. 
    The borrower also specifies three important times: $t^*, t_2, t_3$ with $t_3 > t_2 > t^* + 12$ and $t^*$ being the start of the loan term. 
    The smart contract only accepts inputs from the arbiter from time $t^* +12$ to $t_2$, inputs from the lender from time $t_2$ to $t_3$, and inputs from the borrower from time $t_3$ to $t^*+13$ (more details in the loan repayment phase below).
    \item The collateral is locked until $1$ out of the following conditions are fulfilled:
    \begin{enumerate}
        \item At the end of the loan term the lender inputs a signed opening to the contract.
        The contract then pays the full collateral to the borrower.
        \item At the end of the loan term, the borrower inputs a signed opening to the contract. 
        The contract then pays out Bitcoins worth $(1-\epsilon)$M fiat to the borrower, Bitcoins worth $(1-\epsilon)$M fiat to the lender, and Bitcoins worth $2\epsilon$M fiat to the arbiter.
        \item At one month after the maturity date (i.e. $t^*+13$), the contract automatically sends the full collateral to the lender.
    \end{enumerate}
    \item The arbiter verifies the contract.
    If the contract does not pass the verification step, the arbiter refunds the principal to the lender. 
    If the borrower does not create the contract before a timeout period $t_1$, the arbiter refunds the principal to the lender. Otherwise, the arbiter sends the principal to the borrower at time $t^*$.
\end{itemize}

\item[Loan repayment:]\hfill
\begin{itemize}
    \item From time $t^* +12$ to $t_2$, the smart contract waits for the borrower to send $1$M to the arbiter. 
    If the borrower repays the debt within this time interval, the arbiter immediately notifies the lender.
    \item The lender waits until the arbiter notifies the lender that they have the full principal and inputs a signed opening to the contract at any point between $t_2$ and $t_3$ to send the collateral to the borrower.
    After the contract is opened by the lender within the specified time interval, the arbiter sends $1$M to the lender.
    \item From time $t^*+12$ to $t^* + 13$, two mutually exclusive events can happen: 
    \begin{itemize}
        \item (Honest borrower, dishonest lender.) The borrower sent the full principal but the lender does not open the contract between $t_2$ and $t_3$, so the borrower triggers an opening of the contract after $t_3$ and before $t^*+13$ months.
        In this case, the arbiter gives the principal to the borrower and $\epsilon$M out of the $2\epsilon$M worth of Bitcoins paid to the arbiter from the contract to the borrower.
        The borrower ends up with $2$M ($(1-\epsilon)$M from the contract, $1$M from the loan, and $\epsilon$M from the arbiter). 
        The lender ends up with $(1-\epsilon$)M from the contract, and the arbiter ends up with $\epsilon$M.
        \item (Dishonest borrower, honest lender.) The borrower did not send the full principal and triggers an opening of the contract after $t_3$.
        Suppose the partial principal sent to the arbiter by the borrower is $x$M where $x<1$. 
        The arbiter keeps the partial principal of $x$M and sends $\epsilon$M from their share of the contract to the lender.
        The lender ends up with $1$M ($(1-\epsilon)$M from the contract and $\epsilon$M from the arbiter), the arbiter with $(x+\epsilon)$M, and the borrower with $(1-x+1-\epsilon)$M $<2$M.
    \end{itemize}
    \item If the contract has not been opened by $1$ month after the loan maturity date (time $t^*+13$), the contract automatically pays out the locked collateral to the lender.
\end{itemize}
\end{description}

\subsection{Protocol Security}
\smallskip{\em Game tree and honest strategy.}
We first note that $\pib$ induces a $5$ stage extensive-form game $\gammab$ which we depict in~\Cref{fig:basic_game_tree}.
The game $\gammab$ is played among 2 players, the borrower and the lender.
In the first stage of $\gammab$, only the lender can make a move. The actions available to the lender at this stage can be represented by $\{lend, \neg lend\}$. The action $lend$ represents the lender lending the full loan amount, while $\neg lend$ denotes the set of all other actions that the lender can play which is not $lend$, including lending only a partial amount. Only when the lender plays $lend$ will the game proceed to stage 2. Otherwise, the game terminates at this stage with both lender and borrower having a utility of $0$.

Stage 2 of $\gammab$ is the start of the contract creation phase where the borrower creates a smart contract as specified in the contract creation phase in~\Cref{sec:prot_details}.
Only the borrower can make a move at this stage and the actions available to the borrower can be represented by $\{correct, incorrect\}$. The action $correct$ represents the borrower creates a correct contract (as verified by the arbiter), and $incorrect$ represents the set of all other actions that are not $correct$. Only when the borrower plays $correct$ will the game proceed to stage 3. Otherwise, the game terminates at this stage with both lender and borrower having a utility of $0$.
Stage 2 of $\gammab$ starts at time $t_1$ as depicted in~\Cref{fig:protocol_timeline}. 

Successful contract creation and verification results in the full principal sent by the lender to the borrower, corresponding to time $t^*$ in~\Cref{fig:protocol_timeline}. 
After the loan matures, which corresponds to time $t^*+12$ in~\Cref{fig:protocol_timeline}, the loan repayment phase begins, which corresponds to stage 3 of $\gammab$.
The loan repayment phase spans a duration from time $t^*+12$ to $t_2$ as depicted in~\Cref{fig:protocol_timeline}.
In stage 3, only the borrower makes a move. The actions available to the borrower is the choice of $\{\bot, x\}$ for $x \in [0,1]$M. The action $x$ denotes the amount the borrower decides to repay, while $\bot$ denotes any other action which is not loan repayment.  Wlog, we model playing $\bot$ as the same as playing $x=0$ (i.e., the utilities at the leaves of the game tree depicted in~\Cref{fig:basic_game_tree} corresponding to playing $\bot$ is the same as playing $x=0$).
After the borrower plays their move, the game proceeds to stage 4 with the move of the lender.

In stage 4 of $\gammab$, the lender responds to the borrower's actions in stage 3 with a choice to open the contract and release the collateral or not. 
This stage spans a duration from time $t_2$ to $t_3$ as depicted in~\Cref{fig:protocol_timeline}.
In this stage, only the lender makes a move. Formally, the actions available to the lender can be represented by $\{open, \neg open\}$. The action $open$ represents the lender opening the contract and releasing the collateral, while $\neg open$ represents all other actions that are not $open$.
After the lender makes their move, the game proceeds to the final stage. 

Since the contract can only be opened once, stage 5 of $\gammab$ corresponds to the choice of the borrower to open the contract if the lender does not open the contract at the previous stage. 
Only the borrower makes a move at this stage and this stage happens after time $t_3$ and before time $t^*+13$ as depicted in the protocol timeline in~\Cref{fig:protocol_timeline}.
The actions available to the borrower can be represented by $\{open, \neg open\}$. The action $open$ represents the borrower opening the contract, while $\neg open$ represents all other actions that are not $open$. 

\begin{comment}
Stage $1$ of $\gammab$ begins with the lender deciding whether to lend or not lend the loan, with the choice of lend thereby initiating the contract creation phase. 
The second stage is the response of the borrower with the creation of the smart contract.
If the borrower creates a correct contract (as verified by the arbiter), the contract creation phase of the game terminates and we proceed to the loan repayment phase of the game. 
The third stage is the start of the loan repayment phase which begins with the borrower deciding an amount $x\in [0,1]$M to repay. 
The lender then responds with a choice of whether to open the contract and release the collateral or not in the fourth stage.
Finally, since the contract can only be opened once, stage $5$ of $\gammab$ is the choice of the borrower to open the contract if the lender does not open the contract at the previous stage. 
\end{comment}

\begin{figure}[htb!]
    \centering
    \includegraphics[width=\textwidth]{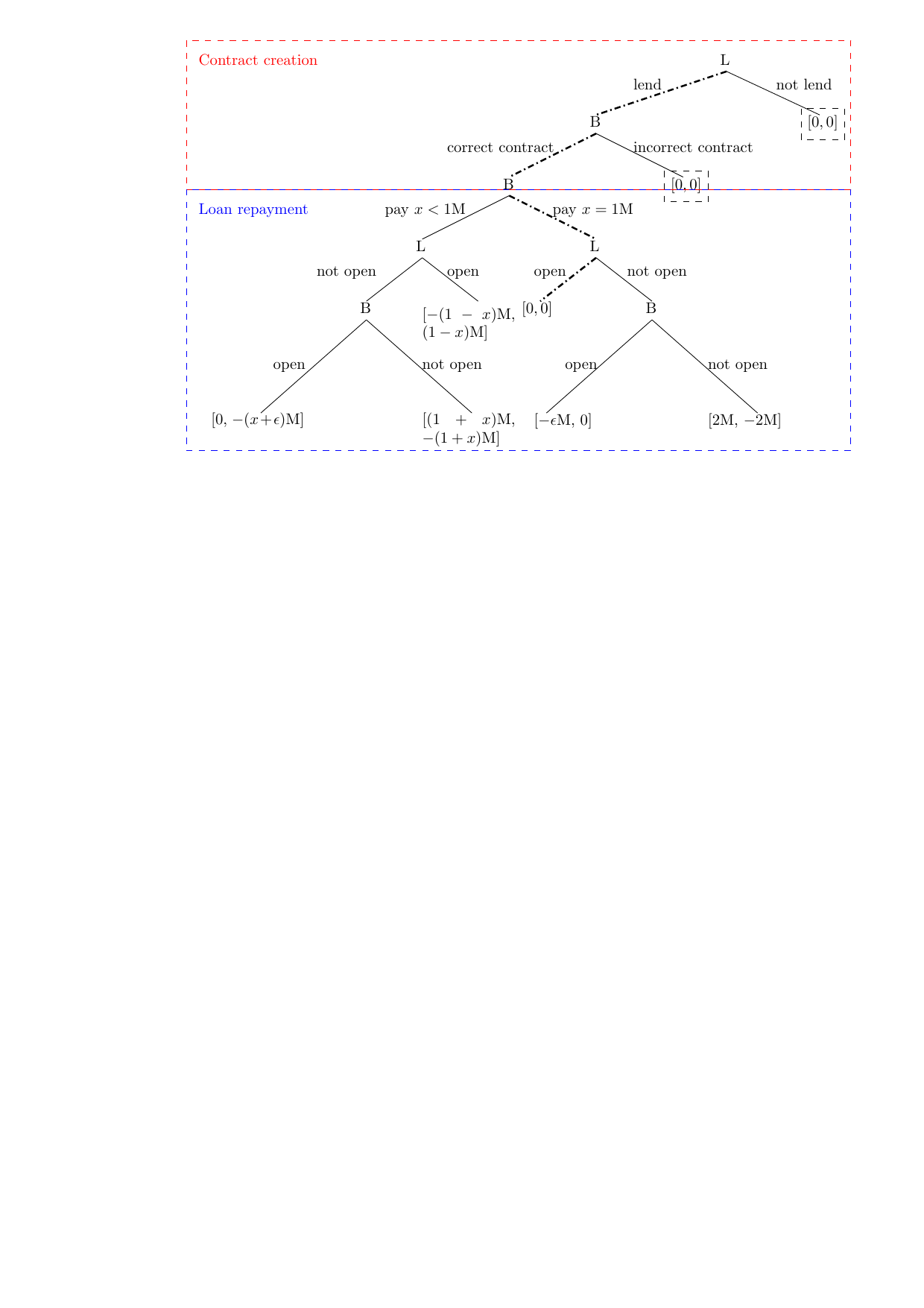}
    \caption{Game tree induced by $\pib$ showing the actions the borrower and lender can take at each step of the protocol. 
    Vertices labelled $L$ are lender vertices and vertices labelled $B$ are borrower vertices. 
    The first element of the utility vector (in fiat) at each leaf node corresponds to the utility of the lender, and the second element the utility of the borrower. 
    The dotted and dashed path depicts the honest strategy.}
    \label{fig:basic_game_tree}
\end{figure}

Let us define the honest strategy profile as $\sigma =(\text{lend}, \text{correct contract}, x=1\text{M}, \text{open})$, i.e., the dotted and dashed path in \Cref{fig:basic_game_tree}. 
The following theorem (with proof in~\Cref{app:proof_basic}) shows that $\sigma$ is a subgame perfect equilibrium in $\gammab$

\begin{theorem}\label{thm:basic}
$\sigma$ is a subgame perfect equilibrium in $\gammab$.
\end{theorem}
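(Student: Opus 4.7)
The plan is to verify the SPE claim by backward induction on the five-stage game tree of $\gammab$, checking at every decision node (including off-path ones) that the prescribed action is a best response to the optimal continuation. Throughout, I would track the net payoff of each player as the algebraic sum of fiat and bitcoin flows, all denominated in fiat at the flat exchange rate.

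First I would handle the terminal stage~5, where the borrower decides whether to open the contract after the lender declined to do so. In either case, whether the borrower fully repaid $x=1$M or only partially repaid $x<1$M, opening yields the borrower at least $(1-\epsilon)$M from the contract (plus, in the honest sub-case, the full principal returned by the arbiter), whereas not opening forfeits the whole collateral via the $t^*+13$ timeout rule. Hence opening strictly dominates, so any rational borrower opens at stage~5. Next I would analyze stage~4. Given the stage-5 continuation, if the borrower played $x=1$M at stage~3 the arbiter notifies the lender, and opening yields a net payoff of $0$ while not opening forces the borrower to open in stage~5 and leaves the lender with only $-\epsilon$M; hence opening is strictly preferred. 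If instead the borrower played $x<1$M, no notification arrives, and opening would give the lender at most $-1+x$, whereas not opening triggers the stage-5 split prescribed by the protocol and nets the lender exactly $0$M, so not opening is strictly optimal in that subgame.

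Combining these two stages, stage~3 reduces to comparing the borrower's payoff at $x=1$M, which equals $0$ along the honest continuation, to the payoff obtained from any $x \in [0,1)$M, namely $-x-\epsilon$M (arising from the partial payment of $x$M to the arbiter plus the $(1-\epsilon)$M return from the contract, offset against the initial collateral of $2$M and the loan of $1$M). Since $-x-\epsilon < 0$ whenever $\epsilon > 0$, the unique best response is $x=1$M. Stages~2 and~1 are then straightforward: creating an incorrect contract or not lending terminates the game with payoff $0$, which matches the honest continuation payoff, so both parties' prescribed actions are (at least weakly) best responses.

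The step requiring the most care is the analysis of stage~3, because the borrower's action set is the continuum $[0,1]$M rather than a finite set; one must confirm that no partial payment strictly dominates full repayment. This hinges on the contract's fixed $(1-\epsilon)$-split being insensitive to the exact amount $x$ the borrower sends to the arbiter, together with the fact that the arbiter simply keeps whatever partial amount is paid. The $\epsilon$-penalty baked into both the contract pay-outs and the arbiter's dispute-resolution rule is what guarantees that the honest profile is a strict best response on the equilibrium path, and hence an SPE of $\gammab$.
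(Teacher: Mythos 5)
Your proposal is correct and follows essentially the same route as the paper's proof: backward induction from stage~5, with the same payoff computations at each stage (opening dominates for the borrower at stage~5, the lender opens iff fully repaid at stage~4, $-x-\epsilon<0$ forces $x=1$M at stage~3, and stages~1--2 are only weak best responses). The paper handles the last point by noting the tie and selecting the honest actions, and separately (in Remark~\ref{rem:delta}) perturbs the failed-deal payoffs to $-\delta$ to break it, exactly as your ``weakly best response'' caveat anticipates.
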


\begin{remark}\label{rem:delta}
Note that since the utility vectors at stages $1$ and $2$ of the reduced subgame are exactly the same, we have multiple equilibria, some of which correspond to strategies that place a non-zero probability on either not lending the loan or not creating a correct contract.
It is reasonable to eliminate such strategies, as they are not meaningful to our analysis of the security of the protocol.
As such, we assume the borrower and lender prefer that a deal occurs (i.e., the borrower successfully gains the loan) than not, and thus we replace the utility vectors of the cases corresponding to a failed deal (i.e., the positions corresponding to the dashed rectangles in \Cref{fig:basic_game_tree}) with $(-\delta, -\delta)$ for some $\delta > 0$. 
\end{remark}

Finally, we also outline some minor extensions of $\pib$ to shield the lender against an unresponsive borrower, and to allow both parties to remain offline during loan repayment.

\smallskip{\em Dealing with an unresponsive/slow borrower.}
We outline an extension of $\pib$ to protect the lender in the case where the borrower might be unresponsive or slow.
This is the case where the lender already sent the principal to the arbiter, but the borrower is unresponsive/slow, and has not created the contract locking the collateral. 
We note that the lender will eventually get back their principal, assuming the arbiter waits for the borrower to act within a prespecified timeout period.
However, the lender runs the risk of having their money locked up with the arbiter for potentially the entirety of the timeout period, incurring a corresponding opportunity cost.

This can be mitigated by splitting the principal into smaller chunks, and executing the protocol separately on each chunk of the loan.
Specifically, the parties agree on a parameter $k$ and execute the contract creation phase of the protocol $k$ times, each time dealing with $\frac{1}{k}$M of the principal.
In each iteration of the contract creation phase, the lender sends $\frac{1}{k}$M to the arbiter.
The borrower locks $\frac{2}{k}$M into the smart contract as collateral (held for the same loan term) with the possibility of paying out $\frac{1-\epsilon}{k}$M to the borrower, $\frac{1-\epsilon}{k}$M to the lender and $\frac{2\epsilon}{k}$M to the arbiter at the end of the maturity date if the borrower inputs a signed opening.
Loan repayment can then be done for each contract separately.

\smallskip{\em Replacing lender and borrower with arbiter in loan repayment.}
We note that since the arbiter is honest, we can replace the actions of the borrower and lender in the loan replacement phase with the arbiter, thereby making it possible for the borrower and lender to remain offline once the borrower pays off the loan. 

\section{Protocols for Fluctuating Exchange Rates}
In this section, we present two protocols that account for fluctuating exchange rates during the loan term.
The first protocol $\pio$ uses an external, third party oracle to constantly update the amount of collateral in the contract such that both the borrower and lender will still be incentivised to follow the protocol. 
The second protocol $\pino$ is a no oracle solution that simply allows the lender and the borrower to liquidate the collateral at any time point.
We work in the same loan setting as described in~\Cref{sec:model}.
That is, the lender loans $1$M fiat to the borrower, and the borrower locks $y$ BTC in the contract where $y$ is the amount of BTC worth $2$M in fiat at the time of contract creation.

\subsection{Using a Price Oracle to Modify the Collateral}\label{sec:oracle}
Here, we describe our first protocol $\pio$, which contains a few small modifications of the basic protocol to ensure the incentives of the borrower and lender are still aligned even though the exchange rates can fluctuate. 
The first modification allows the entire collateral to be given to the lender the moment the price of BTC falls below a certain value.
This compensates the lender in the case where the price of BTC drops and ensures that the value of the collateral is always at least as large as the loan. 
Consequently, the borrower and lender are still incentivised to follow the protocol. 
The second modification allows the contract to make queries to an external price oracle to get the price of BTC and redistribute the collateral according to the updated price.
Our protocol $\pio$ aims to maintain the invariant that the value of the collateral at any queried point in time should not exceed $2$M when converted to fiat -- any excess collateral is transferred to a temporary account for use when the value of the collateral drops below $2$M or given back to the borrower in the case where $\pio$ terminates with the excess collateral. 
This ensures that the incentives of the lender and borrower are the same as the basic protocol $\pib$ in the case where BTC prices go up. 
To ensure that the lender and borrower cannot easily collude with the external oracle and manipulate the exchange rate to their advantage, we recommend using a decentralised oracle connected to a blockchain.

\subsubsection{Protocol Details}\label{sec:oracledetails}
Let $r_0$ be the exchange rate at the start of the protocol and $r_t$ be the exchange rate at maturity.
Let $p_0 = \frac{1}{r_0}$ and $p_t = \frac{1}{r_t}$ denote the corresponding price of BTC at the start and maturity.
$\pio$ consists of $3$ phases: contract creation, collateral redistribution, and loan repayment.

\begin{description}
\item[Contract creation:]\hfill
\begin{itemize}
    \item The smart contract has access to an external price oracle that outputs the current price of BTC in fiat, and the borrower (in agreement with the lender) stipulates the frequency of the queries throughout the loan maturity period, with one query at the loan maturity date.
    We use $q\geq 1$ to denote the total number of queries.
    \item The borrower deposits collateral of amount $y$ which should equal $2$M fiat when converted with exchange rate $r_0$ which is the exchange rate at the start of the protocol.
    \item The collateral is locked until $1$ out of the following conditions are fulfilled:
    \begin{enumerate}
        \item (Lender liquidation condition.) If a queried price of BTC $p_i$ drops such that $p_i < \frac{p_0}{2}$, the collateral is immediately released and transferred in full to the lender.
        \item (Borrower early termination condition.) The borrower sets an additional threshold $\tau$.
        If a queried price of BTC $p_i$ rises such that $p_i \geq \tau > p_0$, $\pio$ proceeds immediately as if the loan maturity is now at time $i$ and initiates the loan repayment phase.
        \item (Collateral invariant condition.) If a queried price $p_i$ is above some threshold, a portion of the collateral is transferred to a temporary account owned by the arbiter. See the collateral redistribution phase for more details. 
        \item At the end of the loan term, the lender inputs a signed opening to the contract.
        The contract then pays the full collateral to the borrower.
        \item At the end of the loan term, the borrower inputs a signed opening to the contract. 
        The contract then pays out BTC worth $(1-\epsilon)$M fiat to the borrower, BTC worth $(1-\epsilon)$M fiat to the lender, and BTC worth $2\epsilon$M fiat to the arbiter in the case where the price of the collateral is equal to $2$M in fiat.
        In the case where the price of the collateral amount is $<2$M, the contract pays out BTC worth $(y \cdot p_t-1-\epsilon')$M to the borrower, BTC worth
        $(1-\epsilon')$M to the lender, and BTC worth $2\epsilon'$M to the arbiter for some $\epsilon'>0$ (recall that $y$ is the amount of BTC worth $2$M at contract creation). Note that these sum up to $y\cdot p_t$ which is the value of the collateral in fiat at time $t$.
        \item At one month after maturity (i.e. $t^*+13$), the contract automatically sends the full collateral to the lender.
    \end{enumerate}
\end{itemize}

\begin{remark}
The purpose of the borrower early termination threshold $\tau$ is to provide the borrower with a ``fast-track" to loan repayment.
This is useful in case the value of the collateral is increasing too much and the borrower wants to liquidate it to use the Bitcoins for some other purpose.
As it is difficult to model exactly what this threshold should be (it depends on a myriad of factors, including the borrower's risk propensity, opportunity cost, market conditions, etc.), we do not specify what $\tau$ should be and simply leave $\tau$ as an option open to the borrower. 
\end{remark}

\item[Collateral redistribution:]\hfill

The collateral redistribution phase starts from $t^*$ and ends at the loan maturity $t^*+12$.
\begin{itemize}
    \item At each predetermined query point $i$, the smart contract makes a query to an external price oracle to get the current price BTC $p_i$.
    Recall that $p_0$ is the price of BTC at contract creation time which we assume is $2$.
    \item ($p_i > p_0$.) 
    A significant rise in BTC price benefits the lender as the lender could profit by \emph{not} opening the contract even after the borrower repays the loan, and instead wait for the borrower to open the contract. 
    Doing so when the price of BTC is large enough could ensure that the amount of BTC the lender receives when the borrower opens the contract using condition $3$ is larger than $1$M when converted to fiat, even with the $\epsilon$ penalty.
    To prevent this, our protocol triggers the collateral invariant condition to release a portion of the collateral to the temporary account to maintain the invariant that the price of the updated collateral amount has to always equal $2$M. 
    That is, BTC worth $(y \cdot p_i - 2)$M is transferred to the temporary account.
    \item ($p_i < p_0$.) 
    A significant drop in the price of BTC could harm the lender, as the collateral would now be a lot smaller when converted to fiat.
    Our protocol transfers any excess coins in the temporary account to the collateral while maintaining that the collateral should not exceed $2$M when converted to fiat.
    If the temporary account is empty, the contract does nothing and the liquidation condition might happen.
\end{itemize}

\item[Loan repayment:]\hfill

The loan repayment phase of $\pio$ is similar to the loan repayment phase in $\pib$, with the only changes being the liquidation condition, as well as handling excess funds in the temporary account:
\begin{itemize}
    \item If the liquidation condition of the contract is triggered, the smart contract releases the full collateral to the lender, bypassing loan repayment. 
    \item (Honest borrower, dishonest lender.) The case where $p_t \geq p_0$ follows exactly in the case of the basic protocol. 
    If $p_t < p_0$, the arbiter gives the principal and $\epsilon'$M to the borrower.
    The borrower ends up with $y\cdot p_t$M. 
    The lender ends up with $(1-\epsilon')$M, and the arbiter with $\epsilon'$M.
    \item (Dishonest borrower, honest lender.) 
    The case where $p_t \geq p_0$ follows exactly in the case of the basic protocol. 
    If $p_t < p_0$, suppose the partial principal sent to the arbiter by the borrower is $x$M where $x<1$. 
    The arbiter sends the partial principal of $x$M and $\epsilon'$ BTC from their share of the contract to the lender.
    The lender ends up with $(1 + x)$M ($1-\epsilon'$M paid out from the contract and $(\epsilon'+x)$M from the arbiter), the arbiter with $\epsilon'$M, and the borrower with $(y\cdot p_t -1 - \epsilon')$M.
    Note $0<\epsilon' < y\cdot p_t -1$.
    \item If there are still excess coins in the temporary account at maturity, all excess coins will be paid out to the borrower when the protocol terminates if the borrower gives the full principal to the arbiter.
\end{itemize}
\end{description}

\subsubsection{Protocol Security}
Let $\alpha := \frac{y}{2}p_t - 1$. 
Intuitively, $\alpha$ captures the difference in price of $1$ unit of BTC at the end of the protocol and at the beginning of the protocol assuming an exchange rate of $1$ at the start.
We first make a few simple observations about some properties about the collateral.

\begin{observation}
The price of the collateral in fiat at any queried time point never exceeds $2$M.
\end{observation}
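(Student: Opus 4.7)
The plan is to prove the observation by induction on the index $i \in \{0, 1, \ldots, q\}$ of the query point, using the redistribution rules specified in the contract creation and collateral redistribution phases of $\pio$. The inductive statement is that immediately after the redistribution at query $i$, the fiat value of the BTC locked in the smart contract is at most $2$M.

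For the base case $i = 0$ (contract creation), by construction the borrower deposits $y$ BTC whose value at the initial exchange rate $r_0$ equals exactly $2$M, so the claim holds with equality.

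For the inductive step, assume the collateral value after query $i-1$ is at most $2$M. At query point $i$ the oracle returns a price $p_i$ and the protocol applies the collateral redistribution rules. First, whenever $p_i$ is such that the on-contract collateral value would exceed $2$M, the collateral invariant condition transfers out the precise excess (BTC worth $(y\cdot p_i - 2)$M in the notation of the protocol) to the temporary account, bringing the value back to exactly $2$M. Second, when $p_i < p_0$ the protocol moves coins from the temporary account into the collateral, but the transfer is explicitly capped so that the resulting collateral value does not exceed $2$M (and if the temporary account is empty, nothing is added). In the remaining case, no redistribution occurs and the inductive hypothesis is preserved. Either way, the collateral value after query $i$ is at most $2$M.

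Finally, the lender liquidation condition and the borrower early-termination condition may also trigger at query $i$, but both bypass further rounds of redistribution and immediately either release the collateral to the lender or enter the loan repayment phase, so they trivially preserve the bound at any query point prior to termination. I do not expect a serious obstacle here, since the redistribution rule was engineered precisely to maintain this invariant; the only mild subtlety is that the symbol $y$ in the protocol description refers to the original BTC deposit, while the on-contract BTC amount evolves across queries, so one has to phrase the inductive claim in terms of the current on-contract amount rather than the static $y$.
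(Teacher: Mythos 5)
Your proof is correct and rests on the same idea as the paper's: the collateral invariant condition removes any excess whenever a queried price would push the on-contract value above $2$M, and the reverse transfer from the temporary account is explicitly capped at $2$M. The paper states this in a single sentence; your induction over query points (including the base case and the capped back-transfer) is just a more careful spelling-out of the same argument.
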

\begin{proof}
This simply follows from the fact that $\pio$ always triggers the collateral invariant condition when a queried BTC price $p_i > p_0$. 
\end{proof}

The next observation is a useful bound on the maximum loss in the price of a unit of Bitcoin in the case where the price of the collateral is between $2$ and $1$.

\begin{observation}\label{obs:alpha_bound}
When $2 > y\cdot p_i > 1$, $\alpha>-0.5$. 
\end{observation}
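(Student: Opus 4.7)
The plan is a direct algebraic manipulation, no cleverness required. Starting from the definition $\alpha := \frac{y}{2} p_t - 1$ (interpreting the observation at the relevant query point, i.e., taking $p_i = p_t$ as the maturity query that feeds into $\alpha$), I would simply take the hypothesis $y \cdot p_i > 1$, divide both sides by $2$ to obtain $\frac{y \cdot p_i}{2} > \frac{1}{2}$, and then subtract $1$ from both sides to conclude $\alpha = \frac{y}{2} p_i - 1 > -\frac{1}{2}$. The upper bound $y \cdot p_i < 2$ in the hypothesis is not needed for this direction; it is recorded to delineate the regime where the collateral has lost value but has not yet triggered the lender liquidation condition $p_i < p_0/2$ (equivalently $y \cdot p_i < 1$).

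The only potential subtlety is making sure the reader sees why this regime $2 > y \cdot p_i > 1$ is the interesting one: below $1$, the liquidation condition has fired and the collateral is fully transferred to the lender, so $\alpha$ is no longer the relevant quantity; above $2$, the collateral invariant condition keeps the collateral value pinned at $2$M and excess BTC is parked in the temporary account, so $\alpha$ is effectively clamped to $0$. I would therefore add one brief sentence pointing out that $-1/2$ is the tightest possible lower bound in this window, attained in the limit $y \cdot p_i \to 1^{+}$. There is no real obstacle here; the statement is essentially recording a normalization that will be used later to bound utilities in the subgame-perfect equilibrium analysis for $\pio$.
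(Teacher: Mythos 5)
Your argument is identical to the paper's: the hypothesis $y\cdot p_i>1$ gives $\frac{y}{2}p_i>\frac12$, hence $\alpha=\frac{y}{2}p_i-1>-\frac12$, with the same implicit identification of the query point $p_i$ with the maturity price $p_t$ appearing in the definition of $\alpha$ and the same (correct) non-use of the upper bound $y\cdot p_i<2$. The surrounding commentary on why this regime is the relevant one is a fine addition but does not change the proof.
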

\begin{proof}
$\alpha = \frac{y}{2}p_t - 1$ and since $\frac{y}{2}p_t > 0.5$, $\alpha>-0.5$.
\end{proof}

\begin{observation}
The price of the collateral in fiat $y \cdot p_i$ hits $1$M when $p_i = \frac{p_0}{2}$. 
This implies that the lender liquidation threshold should be $p_i=\frac{p_0}{2}$ to ensure that the price of the collateral is always larger than the loan, and hence not worthless.
\end{observation}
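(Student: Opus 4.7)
The plan is to unpack the definitions of $y$ and $p_i$ and to do a one-line algebraic computation. By the setup of the contract creation phase in~\Cref{sec:oracledetails}, the borrower deposits collateral $y$ satisfying $y \cdot p_0 = 2$ (in units of M fiat), i.e., $y = 2/p_0$. I would substitute this expression for $y$ into the equation $y \cdot p_i = 1$ and solve: the equation $(2/p_0)\cdot p_i = 1$ is equivalent to $p_i = p_0/2$, which establishes the first sentence of the observation.

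For the second sentence, I would argue by monotonicity of $y \cdot p_i$ in $p_i$ (with $y$ fixed). Since $y \cdot p_i$ is strictly increasing in $p_i$, the collateral value in fiat exceeds $1$M exactly when $p_i > p_0/2$ and falls below $1$M exactly when $p_i < p_0/2$. Therefore, the smallest threshold for which the contract can guarantee that the collateral continues to cover the principal is $p_i = p_0/2$: taking a strictly larger threshold would liquidate unnecessarily, while taking a strictly smaller threshold would allow the collateral value to drop below the loan, rendering it insufficient to compensate the lender in the event of borrower default.

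No step here seems to be a real obstacle; the whole content is the substitution $y = 2/p_0$ and a monotonicity remark. The only thing to be careful about is being explicit that $p_0$ denotes the BTC price at contract creation and that the initial LTV of $1/2$ is what fixes $y \cdot p_0 = 2$ in the first place, so that the reader can see where the factor of $2$ comes from.
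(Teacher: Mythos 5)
Your proof is correct and matches the paper's intent: the paper states this observation without any explicit proof, and the one-line substitution $y = 2/p_0$ (from the requirement that the deposited collateral be worth $2$M at the initial price $p_0$) into $y \cdot p_i = 1$ is exactly the computation being relied upon. The monotonicity remark justifying why $p_0/2$ is the right threshold is a reasonable, if implicit, addition.
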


\paragraph{Game tree and honest strategy.} 
\begin{figure}[htb!]
    \centering
    \includegraphics[width=\textwidth]{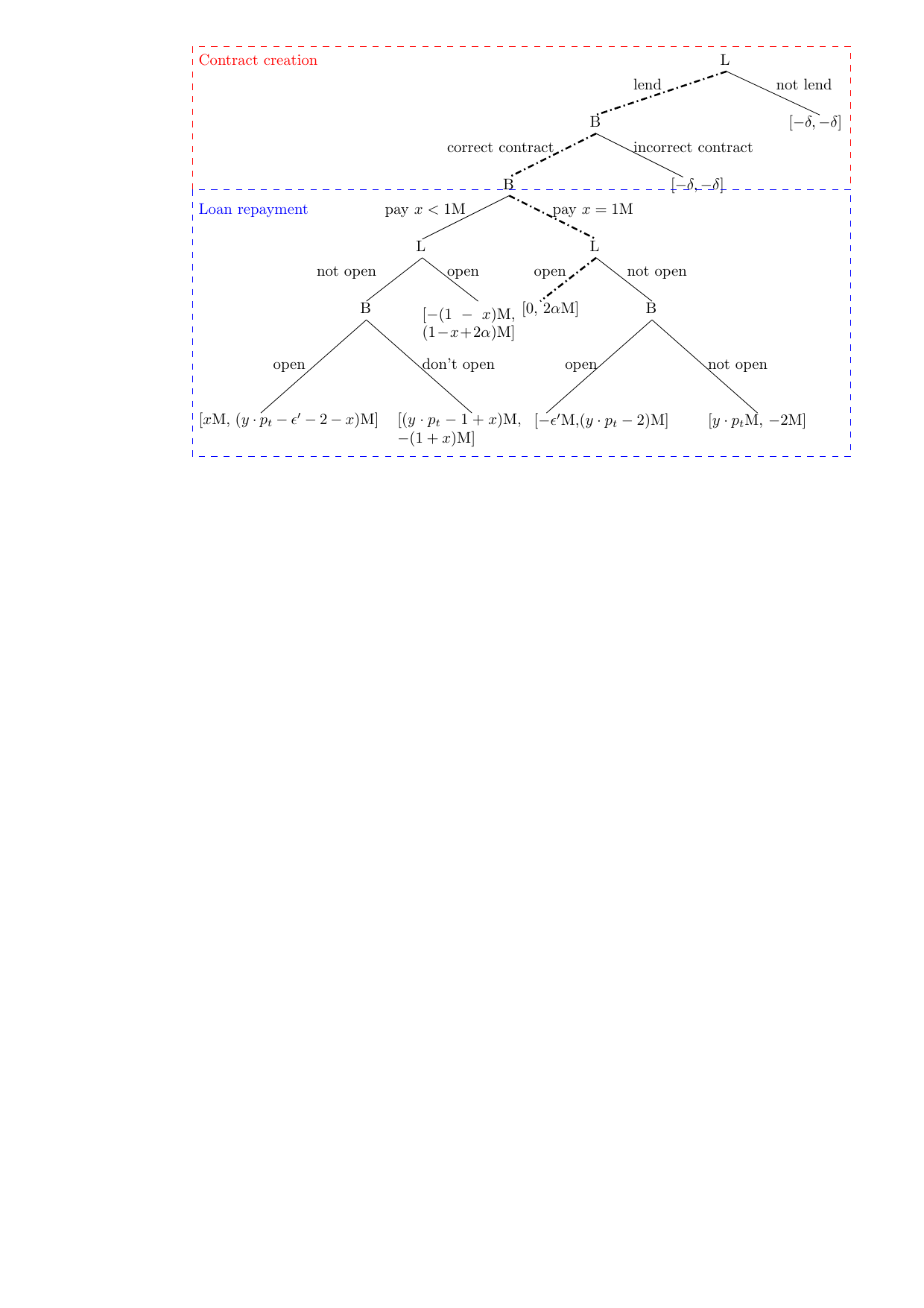}
    \caption{Game tree induced by the protocol with price oracle when the terminal exchange rate $p_t<p_0$ showing the actions the borrower and lender can take at each step of the protocol. 
    The first element of the utility vector (in fiat) at each leaf node is corresponds to the utility of the lender, and the second element the utility of the borrower. 
    The dotted and dashed path depicts the honest strategy.}
    \label{fig:fluctuating_game_tree}
\end{figure}

The $5$ stage extensive form game $\gammao$ induced by $\pio$ is similar to the game induced by $\pib$.
However, the expected payoffs for the players differ depending on the terminal price of BTC.
When the terminal price of BTC $p_t \geq p_0$, the game tree has almost exactly the same payoffs as $\pib$, with the only difference being the payoff of the borrower increases by the excess amount of coins in the temporary account in all leaves stemming from the branch where the borrower decides to pay the full principal of $x=1$M.
We depict the game tree corresponding to $\gammao$ when the terminal price $p_t < p_0$, in~\Cref{fig:fluctuating_game_tree}.
We define the honest strategy profile as $\sigma =(\text{lend}, \text{correct contract}, x=1\text{M}, \text{open})$. 

Following~\Cref{rem:delta}, we use $-\delta$ for some $\delta>0$ to denote the opportunity cost for the lender and borrower if the loan process does not begin.
Our next theorem (with proof in~\Cref{app:main_oracle1}) shows that the honest strategy profile of acting in accordance with the protocol is a subgame perfect equilibrium in $\gammao$.

\begin{theorem}\label{thm:main_oracle1}
Assuming $\delta >1$, $\sigma$ is a subgame perfect equilibrium in $\gammao$.
%Assuming $r_0 > \epsilon> r_t-r_0$ and $\delta >1$, following the honest strategy is a subgame perfect equilibrium in $\gammao$.
\end{theorem}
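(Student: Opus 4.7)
The plan is to verify SPE via backward induction through the five stages of $\gammao$, splitting on whether the terminal price satisfies $p_t \geq p_0$ or $p_t < p_0$, since the two regimes produce qualitatively different leaf payoffs.

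In the regime $p_t \geq p_0$, the collateral invariant condition guarantees that the contract's terminal redistribution coincides exactly with that of $\pib$, and the only deviation from $\gammab$'s payoff vectors is a non-negative credit to the borrower (the residue of the temporary account) that appears only on leaves reached through honest repayment $x = 1$M. Such an additive, non-negative shift concentrated on the honest path can only sharpen the strict inequalities used in the backward induction of \Cref{thm:basic}, so the SPE certificate transports verbatim.

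The substantive case is $p_t < p_0$, shown in \Cref{fig:fluctuating_game_tree}. I would first use the lender liquidation condition together with \Cref{obs:alpha_bound} to pin down the parameter range: the repayment subgame is reached only when no queried price dropped below $p_0/2$, which forces $1 \leq y \cdot p_t < 2$ and in turn makes $\epsilon' \in (0, y \cdot p_t - 1)$ well-defined and strictly positive. Unwinding the tree, at stage 5 the borrower strictly prefers \emph{open} because not opening forfeits both the collateral and the already-repaid principal. At stage 4, along the honest branch opening returns exactly $1$M while the stage-5 fallback yields only $(1-\epsilon')$M, so opening is strictly optimal on-path; along the dishonest branch after $x<1$M, opening yields only $x$M whereas not opening routes through the dispute rule and yields $1+x$M, so not opening is strictly optimal off-path. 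At stage 3, honest repayment nets the borrower $y \cdot p_t$M, whereas any $x<1$M nets at most $y \cdot p_t - x - \epsilon'$M (retained cash plus dispute payout), so $x=1$M strictly dominates. Finally, at stages 2 and 1 the honest continuation yields lender net $0$ and borrower net $y \cdot p_t - 2 \geq -1$, both of which exceed $-\delta$ exactly under the hypothesis $\delta > 1$.

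The main obstacle is the interplay between the liquidation threshold and the $\delta$ assumption: the choice of threshold $p_0/2$ is exactly what guarantees $y \cdot p_t \geq 1$ at repayment, and this same worst-case value $y \cdot p_t = 1$ is what makes $\delta > 1$ tight for the borrower's participation constraint in stages 1--2. A secondary subtlety is the need to analyze the lender's off-path stage-4 move after a dishonest $x<1$M: I must verify that regardless of what the lender plays there, the borrower's best deviation falls short of the honest payoff by at least $\epsilon'$, which is precisely what the dispute rule is calibrated to ensure.
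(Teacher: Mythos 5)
Your backward induction through the five stages for the case $p_t<p_0$ matches the paper's argument step for step (stage 5: the borrower opens because $y\cdot p_t>1$ makes $y\cdot p_t-1-\epsilon'>0$; stage 4: the lender opens on the honest branch and withholds on the $x<1$ branch; stage 3: full repayment beats $x=0$ by $\epsilon'$; stages 1--2: $2\alpha>-1>-\delta$ via \Cref{obs:alpha_bound}), and your reduction of the $p_t\geq p_0$ regime to \Cref{thm:basic} plus a non-negative temporary-account credit on the honest path is also exactly what the paper does. However, there is a genuine gap: you analyze only executions that actually reach the loan repayment subgame, and never the executions in which the mid-loan lender liquidation condition fires (a queried $p_i<p_0/2$, the event the paper calls $E_L$) or the borrower's early-termination threshold fires ($p_i\geq\tau$, the event $E_B$). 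These branches terminate the game before stage 3, so your induction never visits them, yet the stage-1 and stage-2 participation decisions must be optimal along them as well.

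The $E_L$ branch is where your argument as written actually breaks. You assert the lender's honest continuation value is $0$ and hence trivially exceeds $-\delta$; but when liquidation fires the lender receives the full collateral worth $y\cdot p_i<1$M and realizes a strictly negative payoff $y\cdot p_i-1$, while the borrower keeps the principal and loses the collateral for a payoff $1-y\cdot p_i>0$. The paper separately verifies that $y\cdot p_i-1>-1>-\delta$ and $1-y\cdot p_i>0>-\delta$, so that the prefix (lend, correct contract) still dominates non-participation; this is a second, independent place where the hypothesis $\delta>1$ is load-bearing, not just the borrower's constraint $2\alpha>-1$ that you identify. The $E_B$ branch is handled in the paper by observing that early termination simply initiates repayment at a price $p_i>p_0$, reducing to the $p_t\geq p_0$ analysis with optimal continuation $[0,c]$ for some $c>0$; this is consistent with your $p_t\geq p_0$ reduction but needs to be said, since it is a distinct subgame. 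A minor further point: you write $1\leq y\cdot p_t<2$, but at $y\cdot p_t=1$ the interval $(0,\,y\cdot p_t-1)$ for $\epsilon'$ is empty; the argument needs the strict inequality $y\cdot p_t>1$, which is what surviving the liquidation condition actually provides.
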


We conclude this section by noting that we can reduce the control of the arbiter in $\pio$ over the excess funds by firstly separating the arbiter from the contract creation (thereby hiding the collateral amount from the arbiter), and also using Zcash~\cite{Ben-SassonCG0MTV14} to hide the amount of excess funds in the case where the price of BTC goes up. 

\smallskip{\em Hiding the amount of excess coins from the arbiter}
A potential issue with the loan repayment phase of $\pio$ is that the arbiter has control over the excess funds in the temporary account. 
Although we assume the arbiter is honest, this could unsettle the borrower, especially if the amount of excess funds is large.
Additionally, if we loosen the threat model slightly to accommodate arbiters that are honest until a certain profit threshold and turn rational above this threshold, these arbiters might steal the excess funds if the value of the funds rises above this threshold.
One possibility to resolve this issue is to modify the contract creation and loan repayment phases of the protocol slightly to hide the value of the excess coins from the arbiter. 

In the contract creation phase, we first change the way the excess funds are handled when the collateral invariant condition is triggered.
Instead of transferring excess funds to a temporary account, the smart contract just needs to keep track of the excess funds.
Additionally, the verification of the contract would have to be done using automated smart contract verification tools instead of using the arbiter.
The arbiter would simply have to be notified by both borrower and lender that the contract is valid. 
In doing so, the content of the contract (including the collateral amount) and the excess funds are separated from the arbiter.

If there are still excess funds in the contract at the end of the loan repayment phase, we can use Zcash~\cite{Ben-SassonCG0MTV14} to shield the transaction amount from the arbiter. 
More precisely, the contract would first have to mint a Zcash coin with the value of the excess funds in the contract and with the receiver address of the borrower. 
The contract then creates another spending contract with the Zcash coin. 
The spending contract takes as input a signed single bit from the arbiter which represents whether the borrower repaid the full loan in the loan repayment phase.
If the bit is $1$, the coin is sent to the borrower.
If the bit is $0$, the coin is burned (i.e., converted into another Zcash coin with receiver address being a burner address). 

Due to the fact that the arbiter is not involved in the contract creation phase and the security properties of Zcash, the value of the excess collateral is somewhat hidden from the arbiter. 
We stress that since the arbiter still knows the principal and the current exchange rate, the arbiter might be able to gain a rough estimate the amount of excess collateral based on typical collateral values corresponding to certain principal values. 
If this risk is still too high for the borrower, they can simply split the loan into smaller chunks and set up a different contract for each chunk.
This would, however, incur a higher cost linear in the number of contracts.

\subsection{Reducing the Arbiter's Control over the Cryptocurrencies}
The previous protocol suffer from three drawbacks. The first minor drawback is that although termination on the side of the borrower fast tracks the protocol to the loan repayment phase, the parties might have to wait up to a month in order to get their funds back, which could lead to an increase in hidden costs like opportunity costs.
The second drawback is the reliance on the external price oracle which can make the protocol complicated to implement over some cryptocurrencies.
The last drawback is the amount of control the arbiter has over the cryptocurrencies. 
We stress that this control is necessary if we want to ensure both lender and borrower are incentivised to follow the honest strategy as defined by the protocol.
This is due to the fact that if the price of the collateral rises too much, the lender and borrower both have incentives to deviate from the honest strategy as defined by the protocol: the lender might not want to open the contract, and the borrower might not want to pay the loan. 
%Although the ``non-cheating" party is protected by the rise in the value of the collateral and the way the contract distributes the funds when opened, the process of getting their funds and compensation from the contract and arbiter is a lot more inefficient compared to when the two parties behave honestly.

Thus, in this section, we design a protocol $\pino$ that address all three shortcomings of $\pio$. 
Firstly, $\pino$ redefines termination for the borrower to be an \emph{immediate} termination of the contract, whereby the collateral is immediately released and split between the lender and borrower. 
We also allow the borrower to terminate and lender to liquidate the collateral at \emph{any point in time}.
This gives the borrower and lender greater flexibility over the immediacy of their funds at the cost of remaining online for the entirety of the loan term.
%To prevent exploitation of the early liquidation condition in order to profit from the exchange rates on the part of the lender and the borrower, we use the arbiter to decide if the liquidation is reasonable or unreasonable (which we will detail later), and compensate the relevant parties respectively.
$\pino$ also does not require an oracle. 
Rather, the lender and borrower have to remain online throughout the duration of the loan in order to decide if they want to liquidate the collateral.
The arbiter decides if the termination or liquidation is reasonable or unreasonable (which we will detail later), and compensates the relevant parties respectively. 
Finally, $\pino$ limits the amount of BTC controlled by the arbiter to no more than $2\epsilon$. 
In doing so, we have to inevitably sacrifice our notion of the ``honest strategy'' as in the previous protocols, as the lender might not want to open the contract.
Nevertheless, we show that even this minimal amount of control of the arbiter, we can still achieve meaningful stable strategies. 

\subsubsection{Protocol Details}
Our protocol $\pino$ consists of only a contract creation and loan repayment phase.

\begin{description}
\item[Contract creation:]\hfill

The contract creation phase follows almost identically to the contract creation phase of $\pib$, with only an additional condition to unlock the collateral, and a modification of the way the collateral is paid out at maturity which we describe below:
\begin{itemize}
    \item (Early termination/liquidation.) Either the borrower or lender inputs a signed opening to the contract. 
    The contract then pays out $(\frac{y}{2}-\epsilon)$ BTC to the borrower, $(\frac{y}{2}-\epsilon)$ BTC to the lender, and $2\epsilon$ BTC to the arbiter.
    \item At the end of the loan term, the borrower inputs a signed opening to the contract. 
    The contract then pays out Bitcoins worth $(\frac{y}{2}-\epsilon)$M fiat to the borrower, Bitcoins worth $(\frac{y}{2}-\epsilon)$M fiat to the lender, and Bitcoins worth $2\epsilon$M fiat to the arbiter.
\end{itemize}

\item[Loan repayment:]\hfill

The loan repayment phase is also almost identical to that of $\pib$, with the only change being that at any point in time the borrower or the lender can trigger the early termination/liquidation condition of the contract.
When that happens, the loan repayment phase is skipped entirely and the arbiter distinguishes between when the termination/liquidation is reasonable (i.e., when the exchange rate moves past a certain threshold, which we will describe in~\Cref{obs:borrower_thres} and~\Cref{obs:lender_thres} in the next section) or unreasonable. 
If the lender initiates the early liquidation, the arbiter first checks (using their own price oracle) if the rate at which the lender liquidates lies within the actual liquidation threshold of the lender. 
If so, the liquidation is considered reasonable and the arbiter gives the $2\epsilon$ BTC paid out from the contract to the lender. 
Similarly, if the borrower initiates the early termination, the arbiter makes the same check and if the rate lies within the liquidation threshold of the borrower, the arbiter gives $\epsilon$ BTC to the borrower.
If either side makes an unreasonable termination/liquidation, the arbiter keeps the $\epsilon$ BTC from the terminating/liquidating party and gives $\epsilon$ BTC to the other party.
\end{description}

We defer the full security analysis of $\pino$ to~\Cref{app:security_oracleless}.
As a brief overview, we show that although players have a wider range of action profiles to choose from in the game $\gammano$ induced by $\pino$, we can still achieve meaningful stable strategies if we make two simplifying assumptions.
Firstly, we restrict the class of players to be non-myopic, that is, the expected utility given a payoff of $x$ is the same when $x$ is paid out at time $i$ and some other time $t>i$.
Secondly, we assume the expected price of BTC at time $t$ is equal to the current BTC price at time $i$ for $t>i$.
With these assumptions, we can show that as long as the price of BTC is not too high, following the protocol is a subgame perfect equilibrium of the game induced by $\pino$ (refer to~\Cref{thm:main_oracleless} in~\Cref{app:security_oracleless} for more details).

\section{Discussion and Extensions}

\subsection{Comparison of our Protocols}\label{sec:comparison}
We introduce the following metrics to compare the protocols $\pio$ and $\pino$: (1) the amount of funds the arbiter controls, (2) how often parties have to remain online, (3) immediacy of the funds, (4) complexity of the protocol.

$\pino$ fares better than $\pio$ in terms of the amount of control the arbiter has over the locked funds. 
In $\pino$, the arbiter only controls $2\epsilon$ of the collateral and this is independent of the value of the collateral, whereas in $\pio$ the arbiter can potentially control $(y \cdot p_i -2)$M amount of funds which can be arbitrarily large depending on the price of BTC $p_i$.

In $\pino$, parties have to remain online through the duration of the loan to monitor the exchange rates and decide if they are in a profitable position to liquidate or terminate the protocol, whereas in $\pio$ parties do not need to remain online as they can specify liquidation and termination thresholds during contract creation. 
On the flip side, $\pino$ also allows the lender and borrower to have more control over the immediacy of their funds, as opposed to waiting until a query returns a price that fulfils the threshold as in $\pio$.

Finally, we note that whilst $\pio$ allows collateral to be redistributed according to the current price of BTC which allows more fine-grained mechanism design, these oracle calls and collateral redistribution make $\pio$ more complex compared to $\pino$. 

\smallskip{\em Example usage scenarios.} 
We envision $\pio$ to be more popular with borrowers and lenders that do not have the time to be online and track the movement of BTC, or if they more indifferent to the precise liquidation threshold but simply desire a small amount of liquidity. 
Examples of such are ordinary users with some spare cryptocurrencies who actually want to take a loan but prefer to remain offline most of the time. 
Another class of users that $\pio$ could be suited for would be borrowers that need microloans and thus might not be too worried about the amount of control the arbiter has over any excess collateral.
We envision $\pino$ to be more popular with traders or parties for which tracking CeFi and DeFi markets are part of their daily routine, and thus they could easily view making a loan using this protocol as investing in another type of financial derivative.
Hence, the flexibility on when to liquidate the collateral and the immediacy of the funds would be more important.

\subsection{Extensions and Interesting Connections}\label{sec:extensions}

\smallskip{\em Bitcoin compatibility.}
%https://bcoin.io/guides/cltv.html
$\pib$ is Bitcoin compatible, as the contract only needs to lock the collateral and release at specified times corresponding to specified signatures in the future. 
This can be implemented using nested conditional statements and Timelocks\footnote{https://github.com/bitcoin/bips/blob/master/bip-0065.mediawiki/}.
We use $pk_L, pk_B, pk_A$ to denote the public keys of a public key signature scheme of the lender, borrower, and arbiter respectively. Details on the UTXO model can be found in~\Cref{app:bitcoin}.

~\Cref{fig:basictx} in~\Cref{app:bitcoin} depicts the transaction with the spending conditions of the collateral for $\pib$. The overall collateral amount of $2M$ is split into 3 outputs, the first two of value $(1-\epsilon)$M, and the last output of value $2\epsilon$M. 
For each output, the first spending condition corresponds to the lender opening the contract and releasing the collateral to the borrower ($s_L$ denotes the lender's signed opening to the contract). 
The second spending condition for each output corresponds to the borrower opening the contract. 
This differs for each output. 
The first output of $(1-\epsilon)$M can only be spent by the borrower, the second output of $(1-\epsilon)$M can only be spent by the lender, and the final output of $2\epsilon$M can only be spent by the arbiter.
The last spending condition for each output corresponds to the release of the collateral to the lender after the loan repayment phase is over (i.e., after $t\geq t^*+13$ as in the timeline in~\Cref{fig:protocol_timeline}).

$\pio$ can also be implemented over Bitcoin as we can use discrete log contracts~\cite{dlc,GuillyKK22,MadathilTVMFM2022} during the loan term to access price information from an oracle.
To implement collateral redistribution, the collateral is split into $k$ output chunks of value $\frac{y}{k}$ each and chunks are given to the arbiter's temporary account during the redistribution process such that the remaining chunks have value $\leq2$M. 

We leave the optimal selection of the parameter $k$ up to the users, but we stress that a larger $k$ parameter would ensure that the total value of collateral outputs in the main smart contract is closer to $2$M. 
Indeed, the larger the value of $k$, the closer the incentives of the lender would be to the basic protocol. 
Additionally, this prevents the lender liquidation condition from being prematurely triggered due to insufficient collateral.
We note though, that a larger $k$ parameter would increase the size of the smart contract and incur higher verification and computational cost.

If we want to make the modification to hide the excess funds from the arbiter Bitcoin compatible, we simply add an extra condition in the contract creation phase that stipulates excess funds would be minted into a new coin with a receiver address belonging to the borrower or burned depending on the arbiter's input. 

Finally, the contract in $\pino$ is similar to the one in $\pib$ and thus also Bitcoin compatible.
We provide more details in~\Cref{app:bitcoin}.

\smallskip{\em Interest rates.}
Our protocols do not take into account interest rates and their impact on the protocol. 
Nevertheless, we note that accounting for interest rates would have minimal impact on our analysis.
The main changes we would need to account for would be the opportunity cost for the borrower and lender and thus their termination/liquidation thresholds, as well as the initial size of the collateral.
We leave this as an interesting direction of future work.

\section{Conclusion}
In this work, we present the first limited-custodial protocols for cryptocurrency-backed loans in the mixed CeFi-DeFi setting.
Unlike previous protocols in this setting, our protocols limit the control of the trusted third party arbiter to only a small fraction of the collateral and are also capable of dealing with fluctuating exchange rates between the fiat and cryptocurrency.
We also compare our two protocols $\pio$ and $\pino$, showcase their relative advantages and shortcomings, and highlight various user settings where it would make sense for a user to pick one protocol over the other, which shows that our protocols are adaptable to various scenarios. 
Finally, we provide game theoretic analysis of our protocols, showing that in all our protocols the strategy as specified by the protocol is a subgame perfect equilibrium, which means that rational Borrowers and Lenders do not have any incentive to deviate from our protocols.

\bibliography{references}

\appendix

\section{Game theoretic concepts and subgame perfect equilibrium}\label{app:gtconcepts}

Let $\Gamma = (N, (A_i), (u_i))$ be an $N$ player game where $A_i$ is a finite set of actions for each player $i \in [N]$ and denote by $A := A_1 \times \cdots \times A_N$ the set of action profiles. 
The utility function of each player $i$, $u_i: A \rightarrow \mathbb{R}$, gives the payoff player $i$ gets when an action profile $a \in A$ is played. A \emph{strategy} $\sigma_i \in \Delta(A_i)$ of a player $i \in [N]$ is a distribution over all possible actions of the player.

\begin{definition} (Nash Equilibrium).
A Nash Equilibrium (NE) of $\Gamma$ is a product distribution $\alpha \in \times_{j \in [N]} \Delta(A_j)$ such that for every player $i \in [N]$ and for all $a'_i$ in $A_i$, 
$$\mathbb{E}_{a \leftarrow \alpha}[u_i(a)] \geq \mathbb{E}_{a \leftarrow \alpha}[u_i(a'_i, a_{-i})] $$
\end{definition}

Intuitively, a vector of strategies $\alpha$ is a Nash Equilibrium if no unilateral deviation fron $\alpha$ can strictly increase the utility of any player.

%In our work, we also make use of the notion of an \emph{approximate} Nash equilibrium. Intuitively, this is a relaxation of the notion Nash equilibrium where we consider strategies that are $\epsilon$-optimal for some $\epsilon>0$.

%\begin{definition} ($\epsilon$-Nash Equilibrium).
%Fix $\epsilon>0$. An $\epsilon$-Nash Equilibrium of $\Gamma$ is a product distribution $\alpha \in \times_{j \in [N]} \Delta(A_j)$ such that for every player $i \in [N]$ and for all $a'_i$ in $A_i$, $$\mathbb{E}_{a \leftarrow \alpha}[u_i(a)] \geq \mathbb{E}_{a \leftarrow \alpha}[u_i(a'_i, a_{-i})] - \epsilon $$
%\end{definition}

Nevertheless, we note that the solution concept of Nash equilibrium only applies to single-shot games, which are too restrictive to model the entire set of actions players can embark on over time in our loan setting. 
Multi-round games where players' actions arrive sequentially are modelled as extensive-form games. 
For a formal definition of extensive-form games, see, e.g.,~\cite{osbourne04}. 
For our purposes, however, one can simply think of extensive-form games as defined by a game tree $T$. 
All non-leaf vertices in $T$ are partitioned into sets with each set corresponding to one player in the game. 
A player move (or action) at vertex $x$ in their vertex set is simply a choice of an edge from $x$ to some child of $x$. 
A path from the root of $T$ to a leaf vertex corresponds to a sequence of player moves made by the players in the game. 
Each leaf of $T$ is labelled with a utility vector which shows how much utility each player gets when the game play terminates at this leaf. 
In the imperfect information or simultaneous action settings, the vertices belonging to any player are further partitioned into information sets $I$ which capture the idea that a player making a move at any vertex $x \in I$ is uncertain whether they are making the move from $x$ or any other vertex $x' \in I$. 

A subgame of an extensive-form game corresponds to a subtree in $T$ rooted at any non-leaf vertex $x$ that belongs to its own information set $I$, i.e., there are no other vertices that are in $I$ except for $x$. 

\begin{definition} (Subgame perfection).
Let $\Gamma$ be an extensive form game.
A strategy profile is a \emph{subgame perfect equilibrium} of $\Gamma$ if it is a Nash equilibrium for all subgames in $\Gamma$. 
\end{definition}
%A strategy profile is an $\epsilon$-\emph{subgame perfect equilibrium} if it is an $\epsilon$-Nash equilibrium for all subgames in the game. 

\section{Proof of Theorem~\ref{thm:basic}}\label{app:proof_basic}

\begin{proof}
We proceed by backwards induction, starting with the actions of the borrower at the subgames at stage $5$ of $\gammab$.
We first note that we assume $\epsilon < 1$, thus the expected payoff corresponding to the borrower in the subgames corresponding to both subtrees is always larger when the borrower opens the contract than when the borrower does not open the contract. 
Thus, in both subgames, the pure strategy of opening the contract dominates all other pure and mixed strategies, resulting in the optimal expected payoff of [$0$, $-(x+\epsilon)$M] in the left subgame and [$-\epsilon$M, $0$] in the right subgame.

Now we analyse the payoffs of the lender strategies for the subgames at stage $4$ of $\gammab$. 
We note that in the subgame corresponding to the left subtree, the lender's expected payoff of not opening the contract is $0 > -(1-x)$M which is the expected payoff corresponding to opening the contract.
In the subgame corresponding to the right subtree, the lender's expected payoff of opening the contract is $0 > -\epsilon$M which is the expected payoff of not opening the contract. 
Thus, the optimal strategy for the lender is pure strategy of not opening the contract in the left subgame, and opening the contract in the right subgame. 

In stage $3$ of $\gammab$, we first note that choosing $x=0$ maximises the expected payoff of the borrower which is $-x-\epsilon$ if the borrower chooses to pay $x<1$. 
However, even when $x=0$, this is strictly less than the payoff corresponding to paying $x=1$M which is $0$.
Thus, the optimal strategy for the borrower is the pure strategy of paying $x=1$M. 

Finally, we note that the utility vectors corresponding to any strategy at both stages $2$ and $1$ are exactly the same, and so we simply pick the pure strategies ``lend" and ``correct contract" to be the strategies corresponding to the lender and the borrower in stages $1$ and $2$. 
As such, the honest strategy profile $\sigma =(\text{lend}, \text{correct contract}, x=1\text{M}, \text{open})$ is a subgame perfect equilibrium of $\gammab$. 
\end{proof}

\section{Bitcoin compatibility details}\label{app:bitcoin}

\smallskip{\em UTXO model details.}
In the \emph{unspent transaction output} (UTXO) model, each unit of currency (which we will hereafter call coins) exists as an output of a transaction which contains two fields: $(v, c)$. 
The first field of the output $v$ is the value of the output, and the second field $c$ contains spending conditions for the output.
The expressivity of spending conditions depend on the underlying scripting language of the blockchain and common examples are single-user and multi-user ownership, time locks, and hash locks.
The users that are entitled to spend the output also provide a list of witnesses that fulfill the spending conditions of the output. 
Outputs are transferred from user to user via \emph{transactions}, which take as input a set of transaction outputs, and outputs another set of transaction outputs together with spending conditions. 
Note that the total value of the transaction inputs must equal that of the transaction outputs.

\begin{figure}[htb!]
    \centering
    \includegraphics[width=0.5\textwidth]{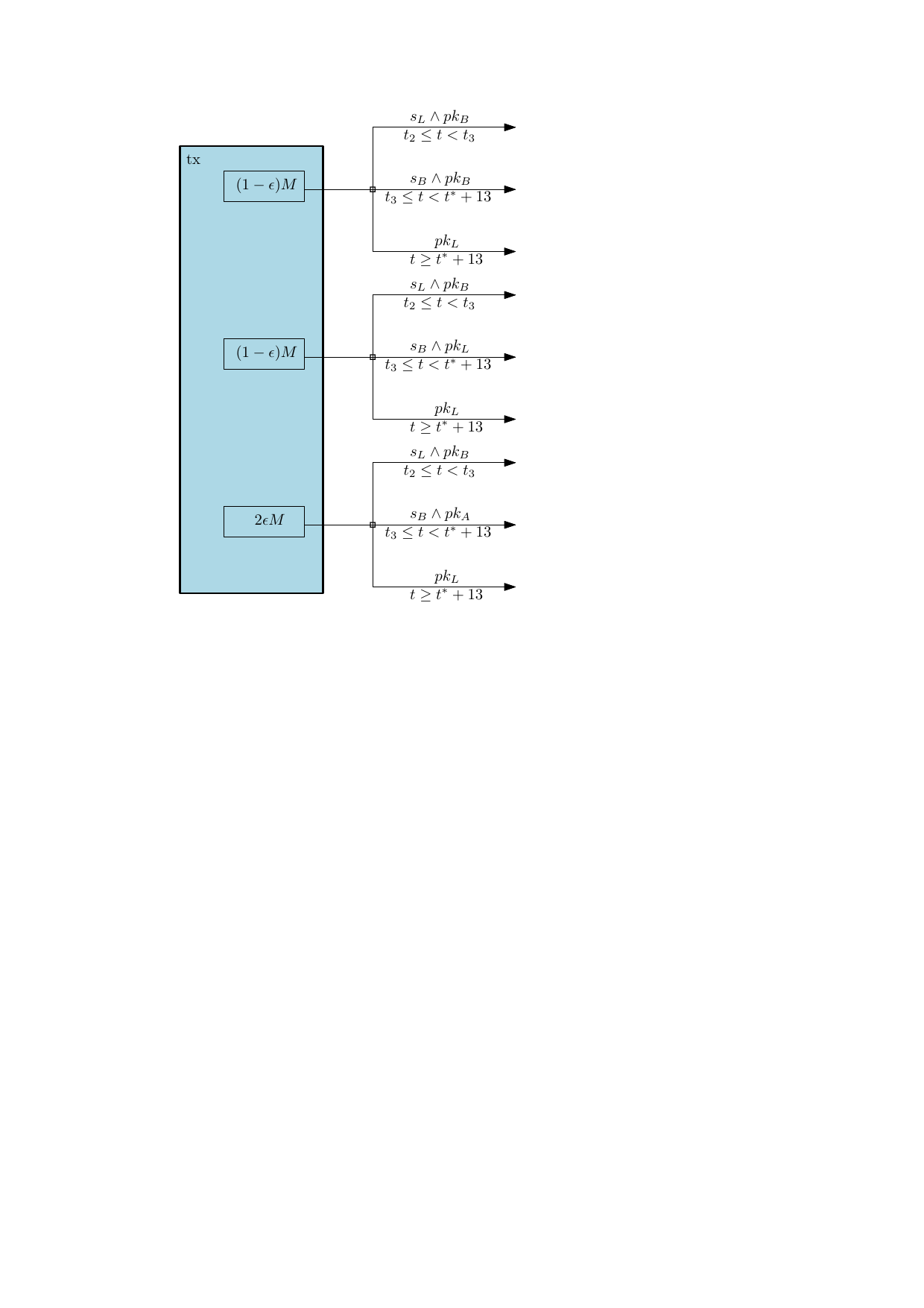}
    \caption{Transaction depicting the release of collateral in $\pib$.}
    \label{fig:basictx}
\end{figure}

\smallskip{\em Details of collateral redistribution for $\pio$.}
We now detail the collateral redistribution in $\pio$. 
Note that apart from collateral redistribution, the rest of the collateral release conditions in the protocol is similar to that in $\pib$.

Recall that the initial collateral amount $y$ should equal $2$M fiat when converted using the exchange rate $r_0$ at the start of the protocol. 
To implement collateral redistribution, the collateral in the smart contract is split into $k\geq2$ output chunks of value $\frac{y}{k}$ each. 
Upon querying the oracle for the price $p_i$ of the cryptocurrency and if $p_i>p_0$, the contract selects output chunks to send to the arbiter's temporary account such that the remaining chunks have value $\leq2$M. 
The collateral in the temporary account is also controlled by a secondary smart contract.
Therefore, if the price $p_i <p_0$ conversely, the secondary smart contract holding the collateral in the temporary account transfers sufficiently many output chunks such that the total value of the outputs in the main smart contract $\leq2$M.

We stress that due to the condition that the total collateral stored in the main contract is $\leq 2$M, the security of the protocol still holds from clause $5$ in~\Cref{sec:oracledetails} and~\Cref{thm:main_oracle1}.

\section{Proof of Theorem~\ref{thm:main_oracle1}}\label{app:main_oracle1}
\begin{proof}
Let us denote the event that an execution contains a query that returns a price $p_i$ such that $P_i$ is smaller than the liquidation threshold $\frac{p_0}{2}$ as $E_L$.
Similarly, we denote the case where an execution contains a query that returns a price $p_i$ such that $p_i > \tau$ as $E_B$.
We first analyse the case where neither $E_L$ nor $E_B$ occur, and only the case where $p_t < p_0$ as the case of $p_t \geq p_0$ follows exactly from the proof of~\Cref{thm:basic}.
We proceed by backwards induction starting from stage $5$ of $\gammao$.

The expected payoff of the borrower in the subgame corresponding to the left subtree at stage $5$ of $\gammao$ when opening the contract is always larger than the expected payoff when not opening the contract since $y \cdot p_t>1 \implies \exists \epsilon' >0$ s.t. $y\cdot p_t-1-\epsilon' > 0$. 
Since $y\cdot p_t>1$, the expected payoff of the borrower in the subgame corresponding to the right subtree when opening is $>-1$M which is larger than the expected payoff when not opening of $-2$M.
As such, the optimal strategy for the borrower for both subgames is to open the contract.
We can then eliminate these subgames and replace them with the optimal payoffs of $[x\text{M}, (y \cdot p_t-\epsilon'-2-x)\text{M}]$ on the left reduced subtree and $[-\epsilon'\text{M}, (y\cdot p_t -2)\text{M}]$ on the right reduced subtree.

Now we analyse the expected payoffs of the lender at stage $4$ of $\gammao$. 
In the subgame corresponding to the left subtree, the expected payoff of the lender is $x$M if the lender does not open the contract. 
This is always larger than the expected payoff of $-(1-x)$M of opening the contract.
In the right subtree, the expected payoff of opening is $0$ which is strictly larger than that of not opening which is $-\epsilon'$.
Again, we eliminate these subgames and replace them with the optimal payoffs of $[x\text{M}, (y \cdot p_t-\epsilon'-2-x)\text{M}]$ on the left reduced subtree, and $[0, 2\alpha\text{M}]$ on the right reduced subtree. 

In stage $3$ of $\gammao$, we first observe that choosing $x=0$ maximises the expected payoff of the borrower.
However, even when $x=0$, the expected payoff of the borrower when paying $x=0$ is $y\cdot p_t-2-\epsilon'$ which is still strictly less than $2\alpha = y\cdot p_t-2$ which corresponds to the expected payoff when paying the full principal of $x=1$M.
Thus, we can eliminate this subgame and replace it with the optimal payoff vector $[0, 2\alpha\text{M}]$. 

For stages $1$ and $2$, we note that since we assume $\delta >1$ and since $2\alpha > -1$ from~\cref{obs:alpha_bound}, creating the correct contract at stage $2$ and lending the loan at stage $1$ are the pure strategies that would maximise the payoffs for the borrower and lender respectively.  

Now we analyse the case where $E_L$ occurs (i.e., a queried price returns $p_i<\frac{p_0}{2}$ for some $i\in [q]$). 
Given $E_L$ occurs, we simply need to show that the strategy profile beginning with prefix (lend,correct contract) dominates any other strategy profile in this case. 
We observe that the moment $E_L$ occurs, the full collateral of $y$ is transferred to the lender at the price $p_i$ and since $p_i<\frac{p_0}{2}$, $y\cdot p_i<1$M. 
This would lead to an expected payoff of $y\cdot p_i-1$ for the lender. 
The borrower gains the principal but loses the collateral, which leads to an expected payoff of $1-y \cdot p_i>0$. 
Since we assume $\delta > 1$ and the expected payoffs for both parties when not starting the loan process is $-\delta$ each, we see that even in the case where $E_L$ occurs, the expected payoffs of both parties are still larger than the payoffs of not beginning the loan process.
Thus, the strategy profile beginning with prefix (lend,correct contract) would always lead to a larger expected payoff. 

For the case of $E_B$, (i.e., a queried price returns $p_i>\tau$ for some $i\in [q]$), we also need to show that the strategy profile beginning with prefix (lend,correct contract) dominates any other strategy profile. 
However, since $\tau > p_0$, this initiates the loan repayment process at the current price $p_i > p_0$.
From the proof of~\Cref{thm:basic}, we see that the optimal payoff from the eliminating other subgames at stage $3$ is $[0,c]$ where $c>0$ is the amount of excess coins paid to the borrower from the temporary account.
Since both lender and borrower payoffs are larger than the payoffs of not beginning the loan process ($-\delta$ each), the strategy profile beginning with prefix (lend,correct contract) would always lead to a larger payoff. 

As such, the honest strategy profile $\sigma =(\text{lend}, \text{correct contract}, x=1\text{M}, \text{open})$ is a subgame perfect equilibrium of $\gammao$.
\end{proof}

\section{Security analysis of $\pino$}\label{app:security_oracleless}
We first make $2$ simple observations on the termination/liquidation threshold for the lender and borrower.
Let $p_i$ denote the current price of BTC.

\begin{observation}\label{obs:borrower_thres}
(Borrower termination threshold.)
\label{lem:borrower_thres}
A rational borrower will not lose out on the collateral amount if termination happens when $p_i$ hits the threshold $\rho_B := \frac{p_0}{1-\epsilon p_0}$.
\end{observation}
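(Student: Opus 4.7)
The plan is a direct break-even computation. The observation identifies $\rho_B$ as the exchange rate at which a borrower who terminates in $\pino$ recovers, in fiat value, exactly the $2$M worth of collateral originally deposited, which is the natural reading of ``not losing out on the collateral amount'' in the mixed fiat–BTC setting.

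First I would record the borrower's position immediately after an early termination. Because $\pino$ skips the loan repayment phase upon termination, the borrower retains the $1$M fiat of principal received at the start of the loan. From the smart contract the borrower receives $\bigl(\tfrac{y}{2}-\epsilon\bigr)$ BTC, and in the worst case where the arbiter rules the termination unreasonable, no additional BTC from the arbiter. Valued in fiat at the current price $p_i$, the borrower's total holdings are therefore $V_B(p_i) = \bigl(\tfrac{y}{2}-\epsilon\bigr)\,p_i + 1$.

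Next I would equate $V_B(p_i)$ with the initial fiat value of the collateral, namely $y \cdot p_0 = 2$M. Using $y/2 = 1/p_0$, which follows from the loan setting assumption that the collateral is worth $2$M at contract creation, the equation $\bigl(\tfrac{1}{p_0}-\epsilon\bigr)\,p_i = 1$ rearranges to $p_i = p_0/(1-\epsilon p_0) = \rho_B$. Hence at $p_i = \rho_B$ the borrower recovers exactly $2$M in value even under the worst-case ruling of the arbiter; for $p_i \geq \rho_B$ the borrower recovers at least $2$M, and under a favorable ruling (returning an additional $\epsilon$ BTC) the borrower strictly gains.

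The only delicate point is pinning down the baseline against which ``losing out'' is measured. The phrasing of the observation points to the initial collateral value $y \cdot p_0 = 2$M rather than, for instance, the expected maturity payoff; once this baseline is fixed, the claim collapses to one line of algebra using $y \cdot p_0 = 2$ and the threshold $\rho_B$ falls out naturally. The main payoff of the computation is pedagogical: it makes transparent that $\rho_B$ is precisely the price at which the contract-specified payout $(\tfrac{y}{2}-\epsilon)$ BTC is worth $1$M fiat, which together with the retained $1$M of principal reconstructs the original $2$M collateral value.
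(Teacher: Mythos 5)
Your proposal is correct and follows essentially the same route as the paper: both compute the borrower's post-termination holdings as $1+\bigl(\tfrac{y}{2}-\epsilon\bigr)p_i$ (principal plus the contract payout, ignoring any bonus from the arbiter) and observe that this equals $2$M exactly at $p_i=\rho_B=\tfrac{p_0}{1-\epsilon p_0}$ and increases beyond that threshold. Your additional remarks --- making the $2$M baseline explicit and noting that the computation is worst-case with respect to the arbiter's reasonableness ruling --- are consistent with the paper's argument and merely spell out what it leaves implicit.
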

\begin{proof}
The payoff of the borrower when the borrower terminates the protocol when the price of BTC is $p_i$ is the sum of the principal and the coins given out from the contract, i.e., $1+(\frac{y}{2}-\epsilon)p_i$. 
We note that this expression is exactly $2$ when $p_i=\frac{p_0}{1-\epsilon p_0}$, and only increases when $p_i > \frac{p_0}{1-\epsilon p_0}$. 
\end{proof}

\begin{observation} (Lender liquidation threshold.)
\label{obs:lender_thres}
A rational lender will not lose out on the principal if liquidation happens when $p_i$ hits the threshold $\rho_L := \frac{p_0}{1+\epsilon p_0}$.
\end{observation}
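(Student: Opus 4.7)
The plan is to compute, in closed form, the fiat value that the lender pockets when a liquidation is triggered at the current BTC price $p_i$, and then identify the smallest price for which that value still covers the principal of $1$M. The computation should parallel the one used for \Cref{obs:borrower_thres}: tally all BTC that flow to the lender and convert via $p_i$.

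First, I would recall the protocol's payouts. When the lender triggers an early liquidation, the contract releases $(\frac{y}{2}-\epsilon)$ BTC directly to the lender. If the arbiter's price check confirms that $p_i$ lies within the lender's threshold (which is precisely the range we are about to characterize), the liquidation is deemed reasonable and the arbiter additionally forwards its $2\epsilon$ BTC share to the lender. Hence the lender's total BTC receipts equal $\frac{y}{2}+\epsilon$, which, at market price $p_i$, translates into a fiat value of
\[
V_L(p_i) \;=\; \left(\tfrac{y}{2}+\epsilon\right) p_i.
\]

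Next I would use the standing normalization from the start of \Cref{sec:oracle}, namely that $y$ BTC is worth $2$M at the initial price $p_0$, i.e.\ $y\,p_0 = 2$, so $y = 2/p_0$. Plugging this in,
\[
V_L(p_i) \;=\; \left(\tfrac{1}{p_0}+\epsilon\right) p_i \;=\; \frac{(1+\epsilon p_0)\,p_i}{p_0}.
\]
The lender avoids any loss on the principal precisely when $V_L(p_i)\ge 1$, which rearranges to
\[
p_i \;\ge\; \frac{p_0}{1+\epsilon p_0} \;=\; \rho_L.
\]
Since $V_L$ is strictly increasing in $p_i$, the threshold $\rho_L$ is the unique break-even price: liquidating at $p_i=\rho_L$ returns exactly $1$M, and liquidating at any $p_i>\rho_L$ strictly exceeds the principal, establishing the observation.

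The argument is almost entirely bookkeeping and there is no real obstacle; the only subtlety worth flagging is that the statement implicitly assumes the arbiter classifies the liquidation as reasonable and therefore hands over the $2\epsilon$ BTC. This is consistent with the circular definition of ``reasonable'' used in \Cref{obs:borrower_thres}: the threshold is exactly where full recovery occurs, so a rational lender liquidating at or above $\rho_L$ is by construction acting within the admissible range checked by the arbiter.
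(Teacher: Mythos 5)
Your computation is the same as the paper's: tally the lender's BTC receipts from a reasonable liquidation, $(\tfrac{y}{2}-\epsilon)+2\epsilon=\tfrac{y}{2}+\epsilon$ BTC, convert at $p_i$, and solve $(\tfrac{y}{2}+\epsilon)p_i=1$ using $y=2/p_0$ to get $\rho_L=\frac{p_0}{1+\epsilon p_0}$ (the paper's own proof even contains a typo here, writing $1-\epsilon p_0$ in the denominator, which your algebra correctly avoids). One caution on your closing remark: the paper declares the lender's liquidation \emph{reasonable} when $p_i\le\rho_L$ (the price has fallen to or past the break-even point), not when $p_i\ge\rho_L$; for $\rho_L<p_i<\rho_B$ an early liquidation is unreasonable, the arbiter withholds the $2\epsilon$ BTC, and the lender recovers strictly less than the principal, so your claim that any $p_i>\rho_L$ strictly exceeds the principal does not hold under the protocol's actual rules. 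This does not affect the break-even computation that the observation asserts.
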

\begin{proof}
If a lender liquidates and it is reasonable, the lender will get $(\frac{y}{2}+\epsilon)p_i$M ($y/2-\epsilon$ BTC from the contract and $2\epsilon$ BTC from the arbiter). 
To protect the lender against falling BTC prices, the authorisation threshold should be such that the lender cannot lose out on the full principal of $1$M during liquidation, thus $(\frac{y}{2}+\epsilon)p_i =1 \implies p_i = \frac{p_0}{1-\epsilon p_0}.$ 
\end{proof}

\Cref{obs:borrower_thres} implies that it is reasonable for the borrower to terminate the protocol at any point in time where the price of BTC rises past $\rho_B = \frac{p_0}{1-\epsilon p_0}$ as any price point past this threshold would be a net gain for the borrower. 
Thus, we say that the borrower's termination is reasonable if the borrower terminates at some time point $i$ and $p_i \geq \rho_B$.
Otherwise, the termination is considered unreasonable.
Likewise,~\Cref{obs:lender_thres} implies that the lender should liquidate the collateral at $\rho_L$ as this is where the lender does not lose out on the principal. 
Thus, we way the lender's liquidation is reasonable if $p_i \leq \rho_L$, otherwise, the liquidation is considered unreasonable.

A simple corollary of~\Cref{obs:lender_thres} is the following inequality which upper bounds the instant payoff in fiat from an unreasonable termination of the borrower when the price of BTC is not too low, with the smallest possible final payoff the borrower can get from the collateral at the end of the loan repayment phase, assuming the lender also liquidates when $p_t \leq \rho_L$.

\begin{corollary}
\label{cor:lender_thresh}
If $\rho_L < p_i <\rho_B$, then $1 + (\frac{y}{2}-\epsilon)p_i < y \cdot \rho_L$.
\end{corollary}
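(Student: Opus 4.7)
The plan is to reduce the claimed inequality to an elementary algebraic statement in $p_i$, $p_0$, and $\epsilon$ by substituting the definitions and the collateral-value identity. First I would use $y \cdot p_0 = 2$ (the collateral is worth $2$M fiat at contract creation) to write $y = 2/p_0$, which gives $\frac{y}{2}-\epsilon = \frac{1-\epsilon p_0}{p_0}$ and, after plugging in $\rho_L = \frac{p_0}{1+\epsilon p_0}$ from~\Cref{obs:lender_thres}, $y\rho_L = \frac{2}{1+\epsilon p_0}$.

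Substituting into $1 + (\frac{y}{2}-\epsilon)p_i < y\rho_L$ and subtracting $1$ reduces the stated claim to
\[
\frac{(1-\epsilon p_0)\, p_i}{p_0} \;<\; \frac{1-\epsilon p_0}{1+\epsilon p_0}.
\]
Since the contract pays $\frac{y}{2}-\epsilon$ BTC to the borrower, this quantity must be positive, so $\epsilon < 1/p_0$ and $1-\epsilon p_0 > 0$. Dividing by this common positive factor leaves the equivalent inequality $p_i < \frac{p_0}{1+\epsilon p_0} = \rho_L$.

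The main obstacle is that the reduction above contradicts the hypothesis $\rho_L < p_i$: plain substitution cannot deliver the stated bound. To close the argument, I would need to reinterpret the right-hand side rather than anchor the bound at $\rho_L$. The most plausible reading is that the subscript on $\rho$ in $y\rho_L$ is a typo for $\rho_B$: by~\Cref{obs:borrower_thres}, $1 + (\frac{y}{2}-\epsilon)p_i$ equals $2$ precisely at $p_i = \rho_B$, and being affine in $p_i$ with strictly positive slope $\frac{y}{2}-\epsilon$, it is strictly less than $2$ throughout $(\rho_L,\rho_B)$; since $y\rho_B = \frac{2}{1-\epsilon p_0} > 2$, this gives $1+(\frac{y}{2}-\epsilon)p_i < 2 < y\rho_B$ directly, via the upper hypothesis rather than the lower one. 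Under the alternative reading that the subscript $L$ is intended but the inequality direction is reversed (i.e.\ the corollary should read $1+(\frac{y}{2}-\epsilon)p_i > y\rho_L$), the algebra above together with the boundary equality $1+(\frac{y}{2}-\epsilon)\rho_L = y\rho_L$ closes the proof via monotonicity. Both variants serve the downstream security analysis of $\pino$, which uses this corollary to dominate the borrower's unreasonable-termination payoff by a payoff attainable by waiting for normal repayment or lender liquidation.
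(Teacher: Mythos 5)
Your algebra is correct, and so is your diagnosis: the corollary is false as stated. The paper's own proof performs essentially the same reduction, just in exchange-rate coordinates ($r_i=1/p_i$, $r_0=1/p_0$, $y=2r_0$): it correctly rewrites the claim as $(r_0+\epsilon)(r_i+r_0-\epsilon)<2r_ir_0$, but then errs in the final rearrangement. Expanding gives $(r_0+\epsilon)(r_0-\epsilon)<(r_0-\epsilon)r_i$, and since $\epsilon<\tfrac{1}{2p_0}=\tfrac{r_0}{2}$ makes $r_0-\epsilon>0$, this is equivalent to $r_i>r_0+\epsilon$, i.e.\ $p_i<\rho_L$ --- the opposite of the hypothesis and of what the paper asserts (``holds when $r_i<r_0+\epsilon$''). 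A concrete counterexample: with $p_0=2$, $y=1$, $\epsilon=0.1$ one has $\rho_L=5/3$ and $\rho_B=5/2$, and at $p_i=2$ the left-hand side is $1+0.4\cdot 2=1.8$ while $y\cdot\rho_L=5/3\approx 1.67$.

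On your proposed repairs: both are true statements, but neither is quite what the downstream analysis consumes. In \Cref{lem:honest_noliquid} (and again in \Cref{lem:subgame_eb}) the corollary is invoked to show that the borrower's instant termination payoff $1+(\frac{y}{2}-\epsilon)p_i-2$ is below the continuation payoff $2\alpha=y\,p_t-2$ under the assumption $\mathbb{E}[p_t]=p_i$; what is needed is therefore $1+(\frac{y}{2}-\epsilon)p_i<y\,p_i$. Your own computation shows this is equivalent to $(\frac{y}{2}+\epsilon)p_i>1$, i.e.\ exactly to $p_i>\rho_L$, so the corollary becomes both true and sufficient once its right-hand side is changed from $y\cdot\rho_L$ to $y\cdot p_i$. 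Your first repair (bounding by $2<y\rho_B$) is too weak for this purpose when $\rho_L<p_i<p_0$, since there $y\,p_i<2$; your second repair reverses the inequality into a lower bound, which is not the direction the lemmas use. The likely source of the error is the sentence introducing the corollary, which conflates the ``smallest possible final payoff'' $y\rho_L$ with the correct comparison point $y\,p_i$.
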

\begin{proof}
It is easier to work with exchange rates instead of price here. 
Recall that $y=2r_0$. 
Thus, $1 + \frac{y/2 -\epsilon}{r_i} < \frac{y}{r_0+\epsilon}$ can be rewritten as  $(r_0+\epsilon)(r_i + r_0-\epsilon) < 2r_ir_0$.
Rearranging, we see that this holds when $r_i < r_0 + \epsilon$, or equivalently, when $p_i> \rho_L$ 
\end{proof}

\smallskip{\em Liquidation game.}
The game $\gammano$ induced by $\pino$ is similar to the game $\gammao$ just with the addition of a new liquidation game that can be played once the protocol begins (i.e., in between stages $2$ and $3$ of the main game $\gammano$).
We depict the full game tree corresponding to $\gammano$ in~\Cref{fig:oracleless_game_tree}.
$\gammano$ is a $6$ stage extensive form game with the extra stage being the liquidation game that lasts for the entirety of the maturity period.
The liquidation game begins with a move by nature that samples the price of BTC $p_i$ at any time step.
The two players (lender and borrower) can then choose between actions $\{$liquidate, not liquidate$\}$, which represents the ability of either player to decide at any point whether to liquidate (or terminate in the case of the borrower) the collateral and, in doing so, terminate the overall game $\gammano$ prematurely. 
Note that the liquidation game is played continuously (i.e., a new rate is sampled from the environment and the game repeats itself) in the event where neither player chooses to liquidate the collateral until the loan maturity date.
If neither player chooses to liquidate the collateral by the end of the liquidation game, the game proceeds to the next stage which is the loan repayment phase.

\begin{figure}[htb!]
    \centering
    \includegraphics[width=\textwidth]{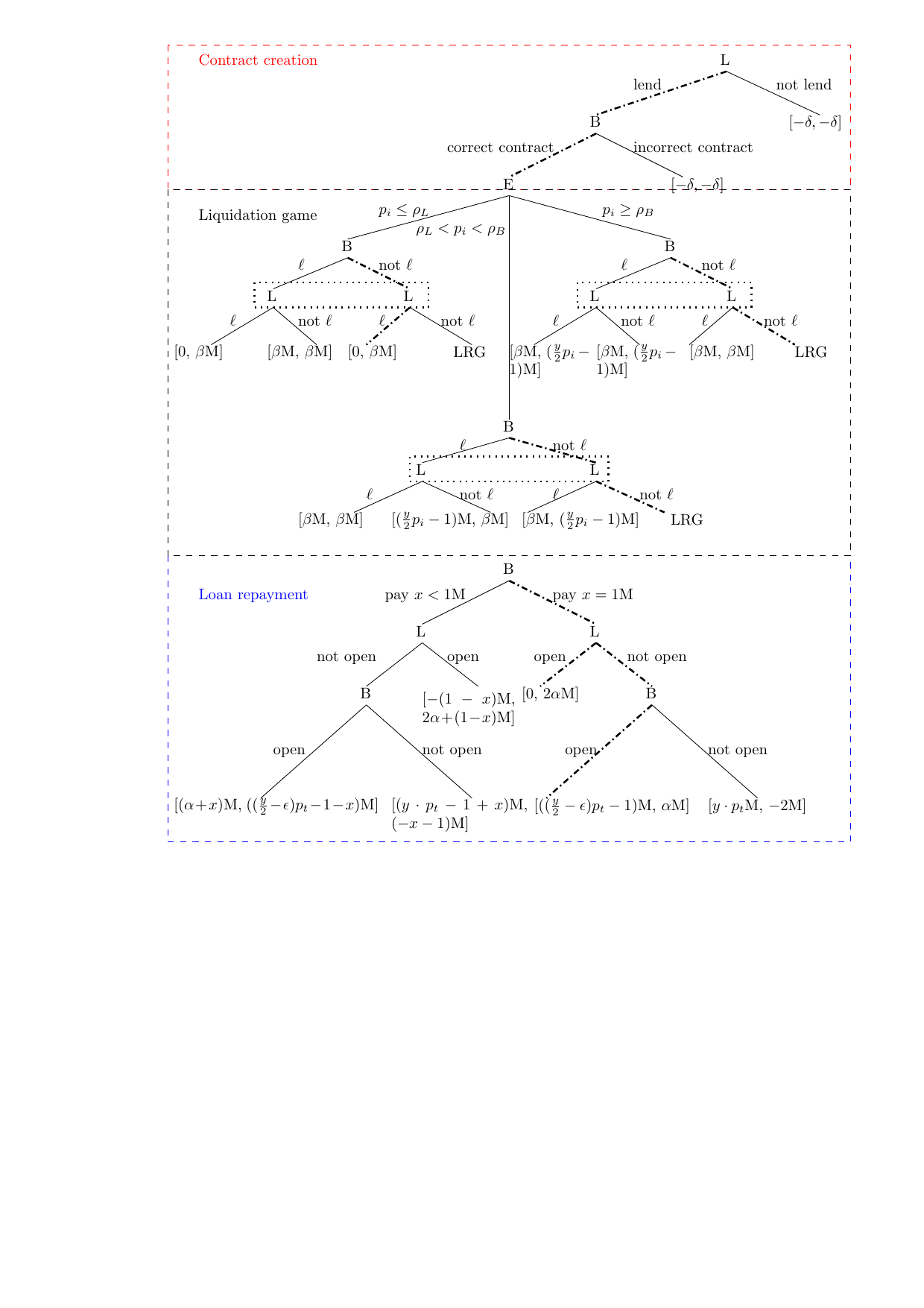}
    \caption{Game tree induced by the protocol without oracle showing the actions the borrower and lender can take at each step of the protocol. 
    In the liquidation game, the vertex $E$ represents a move from nature and $\ell$ denotes liquidation. 
    The dotted rectangle around the lender vertices in the liquidation game denote that they belong in the same information set.
    For ease of presentation, we use $\beta$ to denote $(\frac{y}{2}-\epsilon)p_i-1$.
    LRG at the leaf nodes in the liquidation game the expected payoff from the loan repayment game phase.}
    \label{fig:oracleless_game_tree}
\end{figure}

\smallskip{\em Movement of exchange rates.}
In order to analyse stable strategy profiles for the liquidation game, we need to compare the instant payoff from liquidating the collateral (i.e., the amount paid out when liquidating the collateral converted to fiat at the current exchange rate) compared to the future payoff if the game continues all the way through the loan repayment phase.
As such, we make a simplifying assumption that expected terminal price of BTC should be the same as $p_i$, that is, $\mathbb{E}[p_t] = p_i$.
We also assume the players are non-myopic, that is, the expected utility given a payoff of $x \in \mathbb{R}$ is the same when $x$ is paid out at time $i$ and some time $t > i$.

\smallskip{\em Strategy profiles.} 
Let us denote the event that the price of BTC hits or drops beneath $\rho_L$ as $E_L$, and the event that the price of BTC hits or exceeds $\rho_B$ as $E_B$.
We use the first element in the strategy profile for the liquidation game to represent the lender's strategy in the game, and the second element to represent the borrower's strategy. 
Let the strategy profile $\sigma$ (dotted and dashed path in black in~\Cref{fig:oracleless_game_tree}) be defined as

\begin{itemize}
    \item (\text{lend}, \text{correct contract}, (\text{not liquidate},\text{not liquidate}), x=1\text{M}, \text{open}) if an execution does not contain $E_L$ and $E_B$
    \item (\text{lend}, \text{correct contract}, (\text{liquidate},\text{not liquidate})) if an execution contains $E_L$
    \item If an execution contains $E_B$
    \begin{itemize}
        \item $p_t < \rho_B$: (\text{lend}, \text{correct contract}, (\text{not liquidate},\text{not liquidate})
        \item $p_t \geq \rho_B$: (\text{lend}, \text{correct contract}, (\text{not liquidate},\text{not liquidate}), x=1\text{M}, \text{not open}, \text{open})
    \end{itemize}
\end{itemize}

In the next theorem, we show that for non-myopic players, the strategy $\sigma$ as defined above is a subgame perfect equilibrium of $\gammano$.

\begin{theorem}\label{thm:main_oracleless}
If we set $\epsilon < \frac{1}{2p_0}$, and $\delta>2(\rho_L \cdot \frac{y}{2}-1)$, then $\sigma$ is a subgame perfect equilibrium in $\gammano$ for non-myopic players.
\end{theorem}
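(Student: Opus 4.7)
The plan is to prove \Cref{thm:main_oracleless} by backward induction on the six stages of $\gammano$, reusing the template from \Cref{thm:basic} for the loan-repayment sub-phases and treating the repeated liquidation game at stage $3$ as the main technical step. The two simplifying assumptions (non-myopic utilities and $\mathbb{E}[p_t]=p_i$ for $t>i$) are what let the continuation values be evaluated cleanly, since they allow the continuously-played liquidation game to be analyzed one realized price at a time.

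First I would verify that stages $4$--$6$ (the borrower's repayment amount, the lender's opening, and the borrower's opening) admit essentially the same dominance arguments as in the proof of \Cref{thm:basic}. Because $\pino$ redistributes the collateral symmetrically as $(\tfrac{y}{2}-\epsilon)$ BTC to each party plus $2\epsilon$ BTC to the arbiter, and the condition $\epsilon < \tfrac{1}{2p_0}$ keeps every relevant share strictly positive, the strict-dominance comparisons from $\pib$ carry over with payoffs rescaled by the terminal price $p_t$. The only branching is on the position of $p_t$ relative to $\rho_B$: when $p_t \geq \rho_B$ the borrower refuses to open at stage $5$, the lender opens, and the borrower opens afterwards, mirroring the $\pio$ analysis; otherwise the standard (\text{open}, $x=1$M) path applies.

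The hard part will be the liquidation game at stage $3$. I would collapse it to a stopping-time problem keyed on the realized price $p_i$ and split into three sub-cases. When $p_i \leq \rho_L$, \Cref{obs:lender_thres} guarantees the lender's liquidation payoff is at least $1$M, which by the martingale-like assumption $\mathbb{E}[p_t]=p_i$ weakly dominates any continuation value; the borrower's best response is not to liquidate, since an unreasonable termination costs an additional $\epsilon$. When $p_i \geq \rho_B$, the symmetric argument via \Cref{obs:borrower_thres} handles the borrower side, with the continuation feeding into the loan-repayment subtree analyzed above. When $\rho_L < p_i < \rho_B$, \Cref{cor:lender_thresh} shows that the borrower's instantaneous termination value $1+(\tfrac{y}{2}-\epsilon)p_i$ sits strictly below $y\cdot\rho_L$, which lower bounds the expected continuation payoff, and a parallel inequality on the lender side makes (\text{not liquidate}, \text{not liquidate}) the unique best-response profile inside the information set. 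The subtle step is that the expectation-versus-best-response exchange is legitimate here precisely because both parties observe $p_i$ before moving, so the information set at each sampled price effectively collapses to a singleton for each player.

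Finally, for stages $1$ and $2$ it remains to check that the expected continuation payoff of proceeding dominates $-\delta$ for both parties. The worst case for both players is the $E_L$ branch, whose payoffs are governed by the quantity $\rho_L\cdot\tfrac{y}{2}$; the hypothesis $\delta > 2(\rho_L\cdot\tfrac{y}{2}-1)$ is chosen precisely so that $-\delta$ falls below this worst case on both sides. Combined with the stage-$3$ analysis, this shows lending and creating a correct contract are strictly preferred, closing the backward induction and establishing that $\sigma$ is a subgame perfect equilibrium of $\gammano$.
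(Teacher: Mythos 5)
Your overall plan---backward induction through the six stages, a three-way split of the liquidation game according to the realized price relative to $\rho_L$ and $\rho_B$, and using the non-myopia and $\mathbb{E}[p_t]=p_i$ assumptions to compare instantaneous liquidation payoffs with continuation values---is the same decomposition the paper uses (it packages the three price regimes as \Cref{lem:honest_noliquid,lem:subgame_el,lem:subgame_eb}). Your middle-case argument via \Cref{cor:lender_thresh} and your use of the $\delta$ hypothesis at stages $1$--$2$ also match the paper.

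There is, however, a concrete gap in your treatment of the $E_B$ regime, and a related role confusion. First, you describe the $p_t\geq\rho_B$ branch of the repayment phase as ``the borrower refuses to open at stage $5$, the lender opens, and the borrower opens afterwards''---but stage $5$ is the \emph{lender's} move. The correct continuation is that the lender declines to open (since $(\tfrac{y}{2}-\epsilon)p_t-1\geq 0$ once $p_t\geq\rho_B$, keeping the split is better for the lender than returning the full collateral for a net payoff of $0$), which forces the borrower to open at stage $6$; this is exactly why $\sigma\restriction_{E_B}$ ends in $(\text{not open},\text{open})$ rather than the honest $(\text{open})$. Second, your ``symmetric argument via \Cref{obs:borrower_thres}'' for $p_i\geq\rho_B$ cannot be the mirror image of the $E_L$ case: in $E_L$ the prescribed action is for the lender to \emph{liquidate}, whereas $\sigma$ prescribes that the borrower does \emph{not} terminate when $E_B$ occurs. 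What actually has to be shown is that the borrower's instantaneous termination payoff $1+(\tfrac{y}{2}-\epsilon)p_i+\epsilon p_i-2=\tfrac{y}{2}p_i-1$ is \emph{weakly dominated} by the continuation value of the modified repayment path: it equals $\alpha$ in expectation when $p_t\geq\rho_B$ (using $\mathbb{E}[p_t]=p_i$), and is strictly below $2\alpha$ via \Cref{cor:lender_thresh} when $\rho_L<p_t<\rho_B$. Without this comparison the stage-$3$ equilibrium claim for $E_B$ does not go through, so you should replace the appeal to symmetry with this explicit indifference/dominance argument as in \Cref{lem:subgame_eb}. (A final small point: the lender's vertices in the liquidation game share an information set because the liquidate/not-liquidate moves are simultaneous, not because of price uncertainty, so the set does not ``collapse to a singleton''---one instead exhibits a Nash equilibrium of each simultaneous-move subgame at the realized price, which is what your price-by-price analysis implicitly does.)
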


To prove~\Cref{thm:main_oracleless}, we split it into the three cases where $E_L$ and $E_B$ do not occur, when $E_L$ occurs, and when $E_B$ occurs.

\begin{lemma}\label{lem:honest_noliquid}
If we set $\epsilon < \frac{1}{2p_0}$, and $\delta>2(\rho_L \cdot \frac{y}{2}-1)$, then $\sigma$ is a subgame perfect equilibrium in $\gammano$ given an execution without $E_L$ and $E_B$ .
\end{lemma}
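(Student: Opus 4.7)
The plan is to proceed by backward induction through the stages of $\gammano$, reducing the loan-repayment stages to the corresponding analysis for $\gammab$ and concentrating the novel work on the liquidation game. Since we condition on executions in which neither $E_L$ nor $E_B$ occurs, the price path stays entirely inside the interval $(\rho_L, \rho_B)$ throughout the loan term; in particular, any hypothetical liquidation or termination at any round would be classified as \emph{unreasonable} by the arbiter, by \Cref{obs:lender_thres} and \Cref{obs:borrower_thres}.

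First I would handle the final stages of $\gammano$ exactly as in the proof of \Cref{thm:basic}. The borrower's last-move opening decision and the lender's preceding opening decision induce the same payoff orderings as in $\gammab$: the assumption $\epsilon < \frac{1}{2p_0}$ keeps the $\epsilon$-penalty terms small enough that opening strictly dominates not opening in each subgame. Backward-inducing to stage 3, the borrower's expected payoff from $x=1$M strictly exceeds the payoff from any partial repayment $x \in [0,1)$M (which forfeits the $\epsilon$-penalty and still loses the collateral's bulk), so "$x=1$M" is optimal and the maturity subgame reduces to a deterministic honest continuation.

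The main step, and also the main obstacle, is the liquidation game. I would argue by a one-shot deviation principle over rounds: at every round $i$ with price $p_i \in (\rho_L, \rho_B)$, assuming the prescribed continuation is followed afterwards, neither party benefits from unilaterally liquidating or terminating in round $i$. Using the non-myopic assumption and $\mathbb{E}[p_t \mid p_i] = p_i$, the expected continuation value of honest play equals, up to the deterministic $\epsilon$-bookkeeping at maturity, the corresponding maturity payoff evaluated at $p_i$. For the lender, an unreasonable liquidation at $p_i > \rho_L$ yields instant fiat payoff $(\frac{y}{2}-\epsilon)p_i - 1$, which is dominated by the honest continuation because honest play additionally returns the $1$M repayment from the arbiter on top of the lender's $(\frac{y}{2}-\epsilon)$-share of the contract. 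For the borrower, an unreasonable termination at $p_i < \rho_B$ yields $1 + (\frac{y}{2}-\epsilon)p_i$ M in fiat; \Cref{cor:lender_thresh} is exactly the ingredient that bounds this quantity by $y \cdot \rho_L$, which I would then identify with the worst-case honest continuation value (the case where $p_t$ drops to $\rho_L$ and the lender liquidates reasonably). Chaining these one-shot arguments over the finitely many rounds of the liquidation game yields subgame perfection of "(not liquidate, not liquidate)" throughout the liquidation subgame on the conditioned event.

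Finally I would close the induction at stages 1 and 2. The continuation payoff under $\sigma$ at the root is bounded below by the worst-case honest payoff derived in the liquidation-game step, which is at least $2(\rho_L \cdot \frac{y}{2} - 1)$ for each party after accounting for the possible worst-case terminal price. The hypothesis $\delta > 2(\rho_L \cdot \frac{y}{2} - 1)$ then ensures that "lend" at stage 1 and "correct contract" at stage 2 each strictly beat the outside option with payoff $-\delta$, closing the backward induction. The hardest part of the argument will be verifying cleanly that the instant-deviation bounds for the liquidation game really dominate the \emph{expected} honest continuation at every intermediate round under the martingale-like price assumption, without inadvertently introducing time-discounting effects; this is precisely where the non-myopic preferences are indispensable, and where I would be most careful to check that the bounds given by \Cref{cor:lender_thresh} and the lender's analogous calculation propagate uniformly across all rounds rather than only at maturity.
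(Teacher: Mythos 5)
Your proposal follows the same route as the paper's proof: backward induction, reducing the maturity stages along the lines of \Cref{thm:basic}, a one-shot-deviation analysis of the liquidation game under the non-myopic and $\mathbb{E}[p_t]=p_i$ assumptions with \Cref{cor:lender_thresh} handling the borrower's termination, and the $\delta$ hypothesis closing stages $1$ and $2$. The decomposition and the key ingredients are the ones the paper uses.

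There is, however, a concrete error in your lender-side argument for the liquidation game. You justify "not liquidate" by saying that honest play "additionally returns the $1$M repayment from the arbiter on top of the lender's $(\frac{y}{2}-\epsilon)$-share of the contract." On the honest continuation the lender \emph{opens} the contract, the entire collateral goes back to the borrower, and the lender nets exactly $0$ (principal repaid, no collateral share); the lender only receives a $(\frac{y}{2}-\epsilon)$ share of the collateral on off-path branches where the borrower is the one who opens. The correct comparison is therefore $0$ versus the unreasonable-liquidation payoff $(\frac{y}{2}-\epsilon)p_i-1$, and the latter is negative precisely because $p_i<\rho_B$, i.e., because of the conditioning on no $E_B$ --- not because of the size of $\epsilon$ (the hypothesis $\epsilon<\frac{1}{2p_0}$ is what the paper uses for the borrower's final opening decision). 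This same inequality, $(\frac{y}{2}-\epsilon)p_t-1<0$ for $p_t<\rho_B$, is also what makes the lender willing to open at the penultimate stage; when $E_B$ does occur the lender's best response flips to "not open," which is exactly why the paper needs the separate profile of \Cref{lem:subgame_eb}. Your conclusion survives once the continuation payoff is corrected, but as written the dominance you assert for the lender is derived from the wrong payoffs.
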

\begin{proof}
We start with analysing the payoffs of the borrower at the right subtree at stage $6$ of $\gammano$. 
If the borrower opens the contract the borrower will get a payoff of $\alpha = \frac{y}{2}p_t -1$M compared to a payoff of $-y\cdot p_t$M of not opening. 
A large value of $p_t$ only makes $\alpha$ larger than $-y\cdot p_t$, so we simply need to check the case where $p_t$ is small.
But since $\epsilon < \frac{1}{2p_0}$ can be rewritten as $\epsilon < \frac{r_0}{2} \implies 4(r_0+\epsilon) < 4r_t <3y \implies \alpha = \frac{y}{2r_t}-1 > -\frac{y}{r_t}$.
Thus, the optimal strategy for the borrower in the right subtree is to open the contract. 
In the left subtree, opening the contract gives a payoff of $((\frac{y}{2}-\epsilon)p_t-x-1)$M which is always larger than not opening the contract which gives a payoff of $(-1-x)$M as $(\frac{y}{2}-\epsilon)p_t>0$.
Thus, the optimal strategy for the borrower is also to open the contract in this case.
We can then eliminate both subgames corresponding to these subtrees and replace them by the optimal utility vectors of $[(\alpha+x)\text{M}, ((\frac{y}{2}-\epsilon)p_t-x-1)\text{M}]$ for the left subgame and $[((\frac{y}{2}-\epsilon)p_t-1)\text{M}, \alpha\text{M}]$ for the right subgame.

Next, we analyse the strategies of the lender at stage $5$ of $\gammano$, starting with the subgame corresponding to the subtree on the right.
If the lender does not open the contract, the lender gets a payoff of $(\frac{y}{2}-\epsilon)p_t-1$ which will always be $<0$ for small $p_t$. 
For large $p_t$, $(\frac{y}{2}-\epsilon)p_t-1 < \frac{y}{2}p_0-1 =0$ since we assume $E_B$ does not happen, and thus the optimal strategy for the lender would be to open the contract in this subgame.
For the subgame corresponding to the left subtree, the payoff of the lender when not opening is $(\alpha+x=\frac{y}{2}p_t-1+x)$M which is always larger than opening which gives a payoff of $(-1+x)$M since $\frac{y}{2}p_t>0$.
As such, the optimal strategy for the lender would be to not open the contract in this subgame.
We can then eliminate both subgames corresponding to these subtrees and replace them by the optimal utility vectors of $[(\alpha+x)\text{M}, ((\frac{y}{2}-\epsilon)p_t-x-1)\text{M}]$ for the left subgame and $[0, 2\alpha\text{M}]$ for the right subgame.

We now analyse the strategies of the borrower at stage $4$ of $\gammano$. 
We note that by setting $x=0$, the borrower maximises their payoff of $(\frac{y}{2}-\epsilon)p_t-x-1$ when choosing to pay $x<1$M.
However, even when $x=0$, the borrower's payoff when not paying the full amount is $\frac{y/2-\epsilon}{r_t} -1< 2\alpha$ which is the payoff of the borrower when paying the full principal since we are in the setting where $p_t > \rho_L$.
As such, the optimal strategy for the borrower is to pay the full principal, and we can eliminate the subgame corresponding to this subtree and replace it with the optimal utility vector of $[0, 2\alpha\text{M}]$.

Now we analyse the strategies of both players at stage $3$, the liquidation game (middle subtree in the liquidation game phase in~\Cref{fig:oracleless_game_tree}).
We want to ensure that either player does not gain from deviating from the protocol and performing an unreasonable liquidation of the collateral at any time point between the start of the loan $t^*$ and the maturity.
We first analyse the expected payoffs of the lender. 
If the borrower chooses not to liquidate, the lender's expected payoff is $(\frac{y}{2}-\epsilon)p_i-1$ if the lender liquidates which is the same as the optimal expected payoff of $(\frac{y}{2}-\epsilon)p_t-1$ when proceeding to the loan repayment game phase, as we assume $\mathbb{E}[p_t] = p_i$.
%If the lender performs an unreasonable liquidation at some time point $i$ with exchange rate $r_i$, the lender will receive $(\frac{y}{2}-\epsilon)p_i < \frac{y/2-\epsilon}{(2r_0(y/2-\epsilon))/y}=1$M from the contract. 
%Since the borrower keeps the principal during an unauthorised liquidation by the lender, this would result in overall utility of $\frac{y/2-\epsilon}{r_i}-1$M, which is less than the utility of $0$, which is that the lender would get if they do not make an unauthorised liquidation.
If the lender chooses not to liquidate,  from~\Cref{cor:lender_thresh} we observe that the borrower's expected payoff is $1+ (\frac{y}{2}-\epsilon)p_i -2$ when liquidating which is always smaller than the optimal expected payoff of $2\alpha = y\cdot p_t -2$ of not liquidating.
Thus, any unilateral deviation from the strategy profile $(\text{not liquidate}, \text{not liquidate})$ does not increase the expected payoff of either party in the liquidation game. 
We can thus eliminate this subgame and pick $\{\text{not liquidate}, \text{not liquidate}\}$ as the Nash equilibrium.

At stages $2$ and $1$ of $\gammano$, since we assume $\delta>2(\rho_L \cdot \frac{y}{2}-1) \geq 2\alpha$, both borrower and lender would rather opt to start the loan process than terminate the protocol before the start of the loan process. 

As such, the strategy profile $\sigma =(\text{lend}, \text{correct contract}, \text{not liquidate}, x=1\text{M}, \text{open})$ is a subgame perfect equilibrium of $\gammano$ given an execution without $E_L$ and $E_B$.
\end{proof}

\begin{lemma}\label{lem:subgame_el}
If we set $\epsilon < \frac{1}{2p_0}$, and $\delta>2(\rho_L \cdot \frac{y}{2}-1)$, then $\sigma \restriction_{E_L}$ is a subgame perfect equilibrium in $\gammano$ given an execution with $E_L$.
\end{lemma}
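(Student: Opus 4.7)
The plan is to mirror the backward-induction argument of~\Cref{lem:honest_noliquid}, but localized to the liquidation subgame reached the first time some query returns $p_i \leq \rho_L$. Once this subgame is resolved in favor of $(\text{liquidate}, \text{not liquidate})$, no further stage of $\gammano$ is reached along the $E_L$-branch, so only the initial two stages (lending and contract creation) remain to be verified.

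For the liquidation subgame at the first index $i$ with $p_i \leq \rho_L$, I would enumerate payoffs of the four pure profiles using the payout rules of $\pino$ together with the reasonable/unreasonable classification. Because $p_i \leq \rho_L < \rho_B$, a lender liquidation is reasonable (the arbiter forwards the full $2\epsilon$~BTC to the lender) while a borrower termination would be unreasonable (the arbiter keeps $\epsilon$~BTC from the borrower and forwards $\epsilon$~BTC to the lender). A lender liquidation therefore pays $\bigl((y/2+\epsilon)p_i - 1,\ (y/2-\epsilon)p_i - 1\bigr)$ and an unreasonable borrower termination pays $\bigl((y/2)p_i - 1,\ (y/2-\epsilon)p_i - 1\bigr)$. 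For the ``neither liquidates'' branch I would use the non-myopic assumption together with $\mathbb{E}[p_t] = p_i$: since $p_i \leq \rho_L = 1/(r_0+\epsilon)$, the inequality $(y/2-\epsilon)p_i - 1 \geq y\,p_i - 2$ holds (this is the flipped boundary case of~\Cref{cor:lender_thresh}), so a rational borrower at maturity strictly prefers $x=0$ to $x=1$. The continuation therefore reaches the dishonest-borrower leaf of the loan repayment phase, yielding expected payoffs $\bigl((y/2)p_i - 1,\ (y/2-\epsilon)p_i - 1\bigr)$.

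Comparing the three payoff vectors, the lender strictly prefers to liquidate (gain of $2\epsilon p_i > 0$ over either alternative) while the borrower is indifferent (their payoff equals $(y/2-\epsilon)p_i - 1$ in every case, because none of the arbiter's penalties touch the borrower's contract share). Hence $(\text{liquidate}, \text{not liquidate})$ is a Nash equilibrium of the simultaneous-move liquidation subgame, and an identical computation at every subsequent query with $p_i \leq \rho_L$ rules out profitable one-shot deviations throughout the rest of the liquidation phase.

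Finally, at stages $2$ and $1$, the induced payoff pair on the $E_L$-branch is $\bigl((y/2+\epsilon)p_i - 1,\ (y/2-\epsilon)p_i - 1\bigr)$, worst case at the triggering price $p_i = \rho_L$, which equals $\bigl(0,\ 2(\rho_L\cdot y/2 - 1)\bigr)$. The hypothesis on $\delta$ in the lemma is precisely calibrated so that this pair dominates the no-deal payoffs $(-\delta,-\delta)$ from~\Cref{rem:delta}, so ``correct contract'' and ``lend'' remain best responses and the SPE verification closes. The main obstacle is the simultaneity of the liquidation subgame: the borrower's indifference between ``liquidate'' and ``not liquidate'' under lender liquidation must still certify the prescribed profile as a genuine Nash equilibrium. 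I would resolve this by noting that the borrower's payoff is literally identical across their two actions when the lender is liquidating, so the profile is a (weak) best response regardless of any tie-breaking rule for simultaneous triggers.
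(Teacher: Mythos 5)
Your proposal is correct and follows the same overall structure as the paper's proof: resolve the liquidation subgame at the first query with $p_i \leq \rho_L$ in favor of $(\text{liquidate}, \text{not liquidate})$, observe that the borrower's payoff is unaffected by their own action once the lender liquidates, and then verify stages $1$ and $2$ against the no-deal payoff $-\delta$. The one substantive divergence is in how the ``neither liquidates'' branch is valued. The paper takes the continuation to be the reduced honest path with utility vector $[0, 2\alpha]$M, so the lender is merely \emph{indifferent} between liquidating at $\rho_L$ (payoff $(\frac{y}{2}+\epsilon)\rho_L - 1 = 0$) and continuing; you instead note that with $\mathbb{E}[p_t] = p_i \leq \rho_L$ the stage-$4$ comparison from~\Cref{lem:honest_noliquid} flips (the boundary case of~\Cref{cor:lender_thresh}), so the borrower would default in the continuation and the lender's continuation value drops to $\tfrac{y}{2}p_i - 1 < 0$, making liquidation \emph{strictly} preferred. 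This is a more careful reading of the continuation, and it only strengthens the conclusion, so the two routes agree on the equilibrium. One arithmetic slip: the lender's gain from liquidating over either alternative is $\epsilon p_i$, not $2\epsilon p_i$ --- the lender's contract share is $(\tfrac{y}{2}-\epsilon)$ BTC in every branch, and both alternatives already route $\epsilon$ BTC of the arbiter's $2\epsilon$ BTC stake to the lender, so only the remaining $\epsilon$ BTC is at stake; the sign, and hence the argument, is unaffected.
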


\begin{proof}
Suppose an execution contains an occurrence of $E_L$.
Denote the price of BTC when $E_L$ occurs as $p_i$.
If $E_L$ occurs and the borrower does not liquidate, from~\Cref{obs:lender_thres}, the lender gets $1$M when liquidating at this point, which means that the overall payoff of the lender is $0$.
This is the same as the optimal expected payoff of the lender if the lender does not liquidate as the lender will simply proceed to the loan repayment game of $\gammano$ where the subgames at stages $6$, $5$, and $4$ have been eliminated and replaced by the optimal utility vector $[0, 2\alpha]$M.  
If the lender chooses to liquidate, the expected payoff of the borrower is $(\frac{y}{2}-\epsilon)p_i$ for both choices of liquidating and not liquidating.
Since no unilateral deviation from the strategy profile $(\text{liquidate}, \text{not liquidate})$ gives either player a higher expected utility, we choose $(\text{liquidate}, \text{not liquidate})$ to be the Nash Equilibrium of this subgame.
The analysis of the strategies for the first $2$ stages of the game follows identically as in the proof of~\Cref{lem:honest_noliquid}.
Thus, the strategy profile $\sigma \restriction_{E_L} =(\text{lend}, \text{correct contract}, (\text{liquidate}, \text{not liquidate}))$ is a subgame perfect equilibrium in $\gammano$. 
\end{proof}

\begin{lemma}
\label{lem:subgame_eb}
If we set $\epsilon < \frac{1}{2p_0}$, and $\delta>2(\rho_L \cdot \frac{y}{2}-1)$, then $\sigma \restriction_{E_B}$ is a subgame perfect equilibrium in $\gammano$ given an execution with $E_B$.
\end{lemma}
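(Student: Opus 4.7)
The plan is to run backward induction on $\gammano$, mirroring the arguments of \Cref{lem:honest_noliquid,lem:subgame_el} and splitting the analysis of stages 4--6 according to the two subcases of $\sigma \restriction_{E_B}$ that are distinguished by the realized terminal price $p_t$.

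For the subcase $p_t \geq \rho_B$, the inequality $(\tfrac{y}{2}-\epsilon) p_t \geq 1$ (which is equivalent to $p_t \geq \rho_B = \tfrac{p_0}{1 - \epsilon p_0}$) reverses the stage-5 lender comparison from \Cref{lem:honest_noliquid}: the lender's payoff from not opening, $(\tfrac{y}{2}-\epsilon) p_t - 1 \geq 0$, now weakly dominates the opening payoff of $0$. Anticipating this, the borrower at stage 6 opens the contract to collect $\alpha = \tfrac{y}{2} p_t - 1$ rather than lose $y p_t$ (an ordering that holds under $\epsilon < \tfrac{1}{2 p_0}$ exactly as in \Cref{lem:honest_noliquid}), and at stage 4 prefers the full payment $x = 1$ to any partial $x < 1$ since the best continuation on the left subtree is $(\tfrac{y}{2}-\epsilon) p_t - x - 1 < \alpha$. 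For the subcase $p_t < \rho_B$, the stage-5 inequality is the original one from \Cref{lem:honest_noliquid}, so stages 4--6 unfold identically to that lemma with continuation payoffs $0$ for the lender and $2\alpha$ for the borrower.

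The new substantive step is the stage-3 liquidation game at the moment $i$ when the nature vertex yields $p_i \geq \rho_B$. Under non-myopia and $\mathbb{E}[p_t] = p_i$, the expected continuation payoff of not liquidating equals the payoff computed above evaluated at $p_t = p_i \geq \rho_B$: namely $\tfrac{y}{2} p_i - 1$ for the borrower and $(\tfrac{y}{2}-\epsilon) p_i - 1$ for the lender. The borrower's reasonable termination at $p_i \geq \rho_B$ pays exactly $\tfrac{y}{2} p_i - 1$ (contract share plus the arbiter's $\epsilon$ BTC bonus), and the lender's unreasonable liquidation at $p_i > \rho_L$ pays only the contract share $(\tfrac{y}{2}-\epsilon) p_i - 1$. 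Both parties are therefore indifferent, so (not liquidate, not liquidate) is a Nash equilibrium of this subgame. Finally, stages 1 and 2 are handled in the same manner as in \Cref{lem:honest_noliquid,lem:subgame_el}: the hypothesis $\delta > 2(\rho_L \cdot \tfrac{y}{2} - 1)$ ensures that the worst continuation value across both $p_t$ subcases beats the $-\delta$ opportunity cost of aborting, so lending and creating a correct contract remain best responses. The main obstacle is the stage-3 argument: unlike in \Cref{lem:honest_noliquid}, where both parties strictly prefer not to liquidate, here the indifference sits exactly at the liquidation boundary, so the proof must carefully invoke the $\mathbb{E}[p_t] = p_i$ convention to align the liquidation and continuation payoffs on both sides of the game.
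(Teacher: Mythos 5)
Your proposal is correct and follows essentially the same route as the paper's proof: backward induction with the same split on $p_t \geq \rho_B$ versus $p_t < \rho_B$ at stages 4--6, the same reversal of the lender's stage-5 preference when $(\tfrac{y}{2}-\epsilon)p_t - 1 \geq 0$, the same indifference argument at the liquidation stage via $\mathbb{E}[p_t] = p_i$, and the same use of $\delta$ at stages 1--2. The only cosmetic difference is that the paper additionally spells out the $p_t < \rho_B$ realization at the liquidation stage using \Cref{cor:lender_thresh} to get a strict inequality there, whereas you fold that case into the expectation evaluated at $p_t = p_i \geq \rho_B$; the conclusion is unaffected.
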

\begin{proof}
Suppose an execution contains an occurrence of $E_B$.
Denote the price of BTC when $E_B$ occurs as $p_i$.
If $E_B$ occurs, $p_i$ is $\rho_B$, and the borrower gets $1+(\frac{y}{2}-\epsilon)p_i + \epsilon p_i = 2 + \epsilon p_i $M when liquidating at this point, which only increases as $p_i$ increases.

We first show that, if the lender chooses not to liquidate, the expected utility of the borrower is smaller when liquidating than when not liquidating.
If the borrower chooses not to liquidate and $E_L$ does not happen as well, the game proceeds to the loan repayment stage, where there are two outcomes that could happen, depending on the terminal BTC price $p_t$.
If the terminal price $p_t < \rho_B$, the analysis of the optimal strategies of the lender and receiver would follow as per the setting without an execution of $E_L$ and $E_B$. 
Thus, we need to compare the payoff of the borrower when liquidating to the optimal payoff vector $[0,2\alpha]$M after eliminating the subgames at stages $6$, $5$, and $4$.
From~\Cref{cor:lender_thresh}, since $p_t$ is larger than $\rho_L$ (the lender will liquidate otherwise), the amount the borrower gets when liquidating the collateral will always be less than the amount paid out by the contract at the end of the loan of. 

If $p_t \geq \rho_B$, the optimal strategy that maximises the expected payoff of the borrower at stage $6$ of $\gammano$ is also to open the contract. 
However, the optimal strategy of the lender at stage $5$ of $\gammano$ changes in the subgame corresponding to the right subtree. 
The expected payoff of the lender when choosing not to open the contract is now $(\frac{y}{2}-\epsilon)p_t -1 \geq 0$ if $p_t \geq \rho_B$ and thus the lender would choose not to open the contract in this case. 
The optimal strategy of the lender in the subgame corresponding to the left subtree is unchanged.
Thus, we eliminate both subgames and replace them by the optimal utility vectors of $[(\alpha+x)\text{M}, ((\frac{y}{2}-\epsilon)p_t-x-1)\text{M}]$ for the left subgame and $[((\frac{y}{2}-\epsilon)p_t-1)\text{M}, \alpha\text{M}]$ for the right subgame.
Analysing the borrower's expected payoffs at stage $4$, we see that the strategy that maximises the borrower's payoffs is still choosing to pay the full loan of $x=1$M as $\alpha = \frac{y}{2}p_t -1 > (\frac{y}{2}-\epsilon)p_t -1$. 
As such, we just need to compare the expected payoff of the borrower when liquidating to the optimal payoff of $\alpha$M.
Since $p_i > \rho_B$, the expected payoff of the borrower is $1+ \frac{y}{2}p_i-2 = \frac{y}{2}p_i-1$M and since we assume $\mathbb{E}[p_t] = p_i$, the expected payoff of the borrower is exactly the same as $\alpha$M.
Thus, liquidation gives the same expected payoff as no liquidation in this case. 
As such, if we fix the lender strategy to be not liquidate, given that the expected payoff of the borrower  when liquidating is either same or smaller than the expected payoff of not liquidating, we see that the strategy of not liquidating maximises the expected payoff of the borrower in this case.

Now we need to check if the expected utility of the lender is larger when the lender liquidates if the borrower chooses not to liquidate.
The expected payoff of the lender is $(\frac{y}{2}-\epsilon)p_i -1$ if the lender liquidates when the borrower does not liquidate.
Using the same analysis as the above, if the terminal BTC price $p_t > \rho_B$, the lender will not choose to open the contract and get an expected payoff of $(\frac{y}{2}-\epsilon)p_t -1$ which is the same as the case where the lender does not liquidate and continues to the loan repayment game since we assume $\mathbb{E}[p_t]=p_i$.
If the terminal BTC price $p_t \leq \rho_B$, the lender will choose to open the contract and get a payoff of $0 > (\frac{y}{2}-\epsilon)p_t-1$ when $p_t \leq \rho_B$, thus the expected utility of the lender is smaller in this case.
As such, if we fix the borrower strategy to be not liquidate, given that the expected payoff of the lender  when liquidating is either same or smaller than the expected payoff of not liquidating, we see that the strategy of not liquidating maximises the expected payoff of the lender in this case.

Thus, when the terminal exchange rate $r_t > \rho_B$, $\sigma \restriction_{E_B} =(\text{lend}, \text{correct contract},\\ (\text{not liquidate},\text{not liquidate}), x=1\text{M}, \text{open})$ is a subgame perfect equilibrium in $\gammano$.
When $r_t \leq \rho_B$, $\sigma \restriction_{E_B} =(\text{lend}, \text{correct contract}, (\text{not liquidate},\text{not liquidate}), \\ 
x=1\text{M}, \text{not open}, \text{open})$ is a subgame perfect equilibrium in $\gammano$. 
\end{proof}

\begin{proof}(Proof of~\Cref{thm:main_oracleless})
Follows directly from~\Cref{lem:honest_noliquid},~\Cref{lem:subgame_el}, and~\Cref{lem:subgame_eb}. 
\end{proof}

We end this section by commenting on the stable strategies we can achieve with this protocol.
As we can see from~\Cref{thm:main_oracleless}, one stable strategy profile is for a non-myopic borrower to always not liquidate and wait for the contract to be opened when the price of BTC rises above a certain threshold.
The lender would not open the contract and wait for the borrower to open the contract to release the funds to both sides.
Although this strategy profile might not correspond to the ``honest strategy" as in the case of the previous protocols, we do not believe it is an issue as we implicitly assume in our analysis that the borrower is, to some extent, non-myopic and thus would favour not liquidating over liquidating when both the expected payoffs are the same. 
There could be, however, a different stable strategy profile for a borrower with different preferences and risk propensity level.
We are aware that our assumption that the borrower is non-myopic is fairly strong and not the most realistic.
Nevertheless, accurately modelling these preferences is extremely complicated and challenging, and thus we view our work as a first step and leave enlarging our model to account for these preferences as an interesting direction for future work.
\end{document}